\documentclass[11pt,dvipsnames]{extarticle}

\newif\ifanonymous
\anonymousfalse

\usepackage{amsfonts}
\usepackage{amsmath}
\usepackage{amssymb}
\usepackage{amscd}
\usepackage{amsthm}
\usepackage{mathrsfs}
\usepackage{graphicx}
\usepackage{wasysym}
\usepackage{xcolor}
\usepackage{geometry}
\usepackage{physics}
\usepackage[12hr]{datetime}
\usepackage{textcomp}
\usepackage{hyperref}
\usepackage[utf8]{inputenc}
\usepackage[T1]{fontenc}
\usepackage[nameinlink, noabbrev, capitalize]{cleveref}
\crefname{algocf}{Algorithm}{Algorithms}
\Crefname{algocf}{Algorithm}{Algorithms}
\usepackage{appendix}
\usepackage{mathtools}
\usepackage{xfrac}
\usepackage{nicefrac}
\usepackage{url}
\usepackage{environ}
\usepackage{nicematrix}
\usepackage[hang,flushmargin]{footmisc}
\usepackage{algorithm2e}
\usepackage[shortlabels]{enumitem}
\usepackage{multicol}
\usepackage{tabularx}
\usepackage{xpatch}
\usepackage{changepage}

\ifanonymous
\else
\usepackage[sc,osf]{mathpazo}
\usepackage[euler-digits,small]{eulervm}
\AtBeginDocument{\renewcommand{\hbar}{\hslash}}
\fi

\usepackage[backend=biber,style=alphabetic,backref=true,sorting=anyt,maxalphanames=3,minalphanames=2,maxbibnames=99,doi=false,isbn=false,url=false,eprint=false]{biblatex}

\ifanonymous

\else

\renewbibmacro{in:}{}
\DeclareFieldFormat[article,inbook,incollection,inproceedings,patent,thesis,unpublished]{title}{#1}
\DefineBibliographyStrings{english}{backrefpage={},backrefpages={}}

\DeclareFieldFormat{backrefparens}{\addperiod#1}
\xpatchbibmacro{pageref}{parens}{backrefparens}{}{}
\fi

\hypersetup{colorlinks=true, urlcolor=NavyBlue, linkcolor=Mahogany, citecolor=ForestGreen, pdfborder={0,0,0}, linktoc=all}

\let\oldFootnote\footnote
\newcommand\nextToken\relax
\renewcommand\footnote[1]{\oldFootnote{#1}\futurelet\nextToken\isFootnote}
\newcommand\isFootnote{\ifx\footnote\nextToken\textsuperscript{,}\fi}

\newcommand{\mb}{\mathbb}

\newcommand{\mc}{\mathcal}

\newcommand{\n}{\enspace}
\newcommand{\tx}{\text}
\newcommand{\ol}{\overline}

\newcommand{\vst}{\vspace{0.2cm}}

\newcommand{\poly}{\tx{poly}}
\newcommand{\ord}{\tx{\normalfont ord}}

\newcommand{\email}[1]{\href{mailto:#1}{\textcolor{NavyBlue}{\texttt{#1}}}}
\newcommand{\F}{\mathbb{F}}
\newcommand{\eps}{\epsilon}

\newcommand{\PGL}{\mathrm{PGL}}
\renewcommand{\Tr}{\mathrm{Tr}}
\newcommand{\id}{\mathrm{id}}

\mathchardef\mhyphen="2D

\newcommand{\SD}{\mathrm{SD}}
\newcommand{\ShamirSS}{\mathrm{ShamirSS}}
\newcommand{\Bad}{\mathrm{Bad}}
\newcommand{\Poles}{\mathrm{Poles}}
\newcommand{\ShortPeriod}{\mathrm{ShortPeriod}}
\newcommand{\Share}{\mathrm{Share}}
\newcommand{\Img}{\mathrm{Im}}
\newcommand{\Pp}{\mb{P}^1}

\usepackage{thmtools}
\usepackage{thm-restate}

\theoremstyle{theorem}
\newtheorem{theorem}{Theorem}[section]
\newtheorem{fact}[theorem]{Fact}

\newtheorem*{claim*}{Claim}
\newtheorem{proposition}[theorem]{Proposition}

\newtheorem{lemma}[theorem]{Lemma}
\newtheorem{corollary}[theorem]{Corollary}

\newtheorem{openquestion}[theorem]{Open Question}

\newtheoremstyle{TheoremNum}
{\topsep}{\topsep}{\itshape}{}{\bfseries}{.}{ }{\thmname{#1}\thmnote{ \bfseries #3}}
\theoremstyle{TheoremNum}

\theoremstyle{definition}
\newtheorem{definition}[theorem]{Definition}

\newtheorem{remark}[theorem]{Remark}

\makeatletter
\NewEnviron{restatethm}[2]{%
	\protected@write\@auxout{}{\string\@restatetheorem{#1}{\detokenize\expandafter{\BODY}}}%
	\begin{theorem}[#2]\label{#1}\BODY\end{theorem}%
}
\newcommand{\@restatetheorem}[2]{\expandafter\gdef\csname restatethm@#1\endcsname{#2}}
\newcommand{\restatethmnow}[2]{%
	\begingroup\renewcommand{\thetheorem}{\ref{#1}}%
	\begin{theorem}[#2]\csname restatethm@#1\endcsname\end{theorem}\endgroup}
\makeatother



\makeatletter
\usepackage{titletoc}

\titlecontents*{subsection}
[1.5em]
{\small\raggedright} % \raggedright allows clean line drops and prevents margin overshoots
{\hbox\bgroup\color{\@linkcolor}\thecontentslabel~} % Starts an unbreakable box and inherits your exact hyperlink color
{\hbox\bgroup} % Starts an unbreakable box (for unnumbered sections)
{\egroup} % Closes the unbreakable box (effectively replacing the page number)
[\textcolor{\@linkcolor}{;}\quad] % The separator, which acts as the ONLY allowed place for a line break
\makeatother

\usepackage{titling}
\title{Partial Derandomization for Leakage-Resilient Shamir's Secret Sharing over Composite Order Fields\thanks{A preliminary version of this work is due to appear at ITC 2026.}}
\author{S.~Venkitesh\thanks{Institute for the Theory of Computing, The Stein Faculty of Computer and Information Science, Ben Gurion University of the Negev, Beersheva, Israel. Email: \email{venkitesh.mail@gmail.com}}}
\date{}

\begin{document}

\maketitle

\vspace{-1cm}
\begin{abstract}
	We make progress on the question of constructing explicit evaluation places for leakage-resilient Shamir's secret sharing, over composite order fields.  Previously, Maji et al. (EUROCRYPT 2024) showed that random evaluation places yield Shamir's secret sharing over the composite order field \(\F_{p^d}\) that is statistically secure against physical-bit leakage.  Later, Nguyen (EUROCRYPT 2025) established a \emph{dichotomy} that linear code-based secret-sharing scheme over the field \(\F_{p^d}\) is either statistically secure or completely insecure against such leakage.
	
	Building upon Nguyen's dichotomy, we present a partial derandomization of evaluation places, improving upon the Maji et al. result for a restricted regime of parameters.  We replace the random choice of \(n\) independent evaluation places by the iterates \(x_j = \Phi^j(x_0)\) of a simple fixed rational function \(\Phi\), where the initial point \(x_0 \in \F_{p^d}^*\) is randomly chosen.  The randomness in the evaluation places thus drops from \(nd \log p\) bits to \(d\log p\) bits.  Our construction is valid for the regime \(n = O(d/\log_p d)\), and any reconstruction threshold \(k \ge 2\); in fact, the scheme attains perfect security (statistical distance exactly zero) against single-block leakage. Our technique is a partial fraction nondegeneracy argument that exploits the distinct poles of the rational iterates.
\end{abstract}

%\cleardoublepage
{\small\tableofcontents}
\cleardoublepage

%% ============================================================
%%  SECTION 1: INTRODUCTION
%% ============================================================
\section{Introduction}\label{sec:intro}

Threshold secret-sharing schemes, such as Shamir's scheme~\cite{shamir-1979-secret}, distribute a secret among \(n\) parties so that any \(k\) of them can reconstruct the secret, while any fewer than \(k\) parties learn nothing about it.  In the standard corruption model, an adversary obtains the complete shares of some parties and has no information about the remaining shares.

Side-channel attacks have repeatedly circumvented this all-or-nothing assumption.  Rather than corrupting entire shares, a side-channel adversary accumulates small amounts of information, like individual bits, power traces, and timing signals, from \emph{all} shares simultaneously (see Ishai, Sahai, and Wagner~\cite{ISW03}).  The mathematical abstraction of such threats is \emph{independent local leakage}:  the adversary applies a bounded function to each share independently and observes only the outputs.  The study of locally leakage-resilient secret sharing was initiated by Benhamouda, Degwekar, Ishai, and Rabin~\cite{BDIR21} and is also implicit in the work of Goyal and Kumar~\cite{GK18}.

A particularly natural leakage model, introduced by~\cite{ISW03}, probes \emph{physical bits} in the memory storing each share.  Since elements of a finite field \(\F_{p^d}\) are stored as \(d\) coordinates over \(\F_p\), each represented as a \(\lceil\log_2 p\rceil\)-bit binary string, a physical bit probe extracts a single bit from this representation.  Additive secret sharing is known to be vulnerable to such probes:  the parity-of-parities attack of Maji, Nguyen, Paskin-Cherniavsky, Suad, and Wang~\cite{MNPSW21} leaks the least significant bit of each share and distinguishes secrets with advantage \((2/\pi)^n\) over any prime field \(\F_p\).  Over characteristic-\(2\) fields, this attack can distinguish secrets \(0\) and \(1\) with certainty.  Shamir's secret sharing inherits these vulnerabilities if its evaluation places are chosen carelessly~(\cite{MNPSW21}, and Costes and Stam~\cite{CS21}).

The question, then, is whether Shamir's scheme can be instantiated to resist physical bit probes.  Over prime fields,~\cite{MNPSW21} showed that choosing evaluation places uniformly at random yields a leakage-resilient scheme with high probability.  Maji, Nguyen, Paskin-Cherniavsky, and Ye~\cite{MNPY24} extended this randomized construction to composite order fields, particularly fields of characteristic \(2\), and also gave an exact security classifier for \(k=2\) against single-block leakage.  However, no explicit evaluation places were known for \(k > 2\) or for the general block-leakage regime over composite fields.  Even for \(k = 2\), the classifier of~\cite{MNPY24} certifies individual evaluation-place tuples but does not produce a closed form family.  Separately, Hwang, Maji, Nguyen, and Ye~\cite{HMNY24} initiated the study of explicit constructions over prime fields, providing classifiers and derandomized evaluation places for Mersenne and Fermat primes in the full-threshold regime \(n = k\).

In this work, we present a partial derandomization of~\cite{MNPY24}.  Building on the dichotomy of Nguyen~\cite{nguyen-2025}, which establishes that any linear code-based secret-sharing scheme over the field \(\F_{p^d}\) is either \emph{perfectly} secure or completely insecure against physical-bit leakage, we replace the random choice of \(n\) independent evaluation places by a structured one-parameter family:  the evaluation places are \(x_j = \Phi^j(x_0)\), where \(\Phi(x) \coloneqq \alpha x/(x+1)\),  \(\alpha\) is a fixed multiplicative generator of \(\F_{p^d}^*\), and \(x_0\in \F_{p^d}^*\) is a randomly chosen \emph{base point}.  As it turns out, \(\Phi\) is a \emph{M\"obius transformation}, and the utility of such structured algebraic transformations in defining evaluation places seems unexplored so far.  The randomness needed to specify the \(n\) evaluation places drops from \(nd \log p\) bits to \(d\log p\) bits, required to choose the base point \(x_0\).  The above mentioned dichotomy argument reduces the security question to a \emph{sufficient} full-rank condition on a test matrix, which we verify for our chosen points by observing a nondegeneracy property that exploits the distinct poles of the iterates.

The dichotomy itself is implicit, for the case \(k = 2\), in the linear-algebraic arguments of~\cite{MNPY24}, and~\cite{nguyen-2025} later extends this to the case of general \(k\), as well as to general linear code-based secret sharing.  We also give an alternative direct derivation of the dichotomy in our setting for completeness, in \Cref{app:dichotomy-proofs}.

\subsection{Motivation}\label{subsec:motivation}

The problem of \emph{derandomization of evaluation places} is the central motivation for our work.

The randomized constructions of~\cite{MNPSW21,MNPY24} demonstrate that most evaluation places are secure.  In practice, however, they require trusted public randomness, for instance a randomness beacon, to select the places.  An adversary who can influence the random seed may steer the construction toward vulnerable evaluation places, unbeknownst to the honest parties.  This concern is not merely theoretical; the NIST standardization effort for threshold cryptographic schemes~(see Brand\~{a}o and Peralta~\cite{NIST-threshold}), and the practical deployment considerations studied by Faust, Masure, Micheli, Orlt, and Standaert~\cite{FMMOS24} both underscore the need for deterministic, verifiable instantiations.

Ideally, one would like evaluation places that are fixed by the field specification alone, with a proof that they resist physical bit probes.  Towards this,~\cite{HMNY24} initiated the study of classifier and derandomization constructions over prime fields.  Their techniques, however, are inherently tied to the structure of prime fields:  they exploit square wave orthogonality and \(2\)-adic valuations of rational approximations, both of which rely on the nonlinearity of the bit-extraction map over \(\F_p\).  Over composite fields \(\F_{p^d}\) with \(d \ge 2\), the analogous coordinate-extraction map is \(\F_p\)-linear, and the Fourier-analytic machinery of~\cite{HMNY24} does not apply.  The present work exploits this linearity to make partial progress on the derandomization problem in the composite-field setting; the precise results are stated in~\Cref{subsec:results}.

\subsection{Our results}\label{subsec:results}

We work over \(\F = \F_{p^d}\), where \(p\) is any prime and \(d \ge 2\).  Fix a primitive polynomial \(\Pi(t) \in \F_p[t]\) of degree \(d\), and let \(\alpha\) be a root of \(\Pi\) in \(\F\), so that \(\ord(\alpha) = p^d - 1\) and \(\F = \F_p[\alpha]\) with canonical basis \(\mc{B} = \{1, \alpha, \alpha^2, \ldots, \alpha^{d-1}\}\).

Our construction is based on a rational function.  Define the \emph{step operator}
\[
\Phi(x) = \frac{\alpha x}{x + 1},
\]
which is a M\"obius transformation on the projective line \(\Pp(\F) := \F \cup \{\infty\}\) with projective order \(p^d - 1\).  Our evaluation places are the iterates
\[
x_j = \Phi^j(x_0), \quad j = 0, 1, \ldots, n-1,
\]
where \(\Phi^j\) denotes the \(j\)-th iterate of \(\Phi\) under composition, applied to a base point \(x_0 \in \F^*\).

The choice of~\(\Phi\) is not arbitrary.  The single structural property of the orbit that drives the entire security analysis is that the iterates~\(\Phi^0, \Phi^1, \ldots, \Phi^{n-1}\), viewed as M\"obius transformations on the projective line \(\Pp(\F)\), have \emph{pairwise distinct poles}:  the pole of \(\Phi^0 = \id\) is \(\infty\), and the pole of \(\Phi^j\) for \(j \ge 1\) is a point of \(\F^*\), these being distinct for distinct \(j\).  This pole-distinctness is what powers the partial fraction nondegeneracy argument in \Cref{sec:single-block}.  In fact, motivated from the construction of \emph{folded Reed-Solomon codes}~\cite{guruswami-rudra-08}, which are polynomial codes that can be \emph{list decoded} up to the information theoretic limit, a more obvious and simpler candidate is \(\Psi(x) = \alpha x\).  Here, every iterate \(\Psi^j(x) = \alpha^j x\) is a pure dilation with its only pole at~\(\infty\).  So the iterates share a single common pole and the partial fraction argument collapses entirely.  We make this contrast precise in~\Cref{rem:dilation-fails}.  The role of the~\(+1\) in the denominator of~\(\Phi\) is therefore not cosmetic; it moves the pole of every non-identity iterate to a distinct point of \(\F^*\), placing the orbit in sufficiently generic position so that a good bad-set bound becomes possible.

Our main results are as follows.
\begin{restatethm}{thm:main-single-block}{Perfect security against single-block leakage}
Let \(p\) be any prime, \(d \ge 2\), \(k \ge 2\), and \(k\le n \le d(k-1)\) with \(n \le p^d - 1\).  Let \(\alpha\) be a multiplicative generator of \(\F_{p^d}^*\) and \(\Phi(x) = \alpha x/(x+1)\).  There exists a set \(\Bad \subset \F_{p^d}^*\) with
\[
|\Bad| \le n + n(dp)^n
\]
such that for any \(x_0 \in \F_{p^d}^* \setminus \Bad\), the evaluation places \(\vec{X}=(x_0,\ldots,x_{n-1})\) give a scheme\\ \(\ShamirSS(n, k, \vec{X})_{\F_{p^d}}\) that is \emph{perfectly secure} against all single-\(\F_p\)-block leakage patterns.

In particular, the scheme is perfectly secure against all single-physical-bit-per-share leakage (for any \(p\)), and against any leakage that applies an arbitrary function \(g_j : \F_p \to S_j\) to a single \(\F_p\)-block of each share, where \(S_0, \ldots, S_{n-1}\) are arbitrary nonempty finite sets.

A good base point \(x_0\) exists whenever \(d > n(1 + \log_p d) + \log_p(2n)\), giving a practical range of \(n = O\big(d/\log_p(pd)\big)\) parties.
\end{restatethm}
\noindent \Cref{thm:main-single-block} is proved in \Cref{sec:single-block}.  The bad set and perfect security are \Cref{thm:single-block}, the existence claim is \Cref{cor:existence-single}, and the two particular claims are \Cref{pro:sub-block} and \Cref{cor:physical-bit}.

The security guarantee is \emph{perfect}:  the statistical distance is zero, not merely exponentially small.  Within the parameter range \(n = O(d/\log_p d)\), this is a qualitative strengthening of the statistical security \(\eps = 2^{-\Omega(d)}\) achieved by the randomized constructions of~\cite{MNPY24,nguyen-2025}.  The trade-off is on party count:~\cite{MNPY24,nguyen-2025} support \(n = O(dk/\log_p d)\) parties versus our \(n = O(d/\log_p d)\), reflecting the cost of restricting the evaluation places to a one-parameter family.

The construction is a partial derandomization.  Once a base point \(x_0\) is fixed, the evaluation places \(x_j = \Phi^j(x_0)\) are determined by the field representation alone; the randomness needed to specify the \(n\) places thus drops from \(nd \log p\) bits (for \(n\) independent random places, as in~\cite{MNPY24}) to \(d\log p\) bits needed to fix the single \(x_0 \in \F^*\).  In the headline regime, \(x_0 = \alpha\) is a structured candidate; whether it is good depends on the choice of primitive polynomial \(\Pi\) and is certified by~\Cref{alg:classifier} (\Cref{subsec:classifier}), a sound test whose \textsc{good} verdict guarantees perfect security, with the same worst-case runtime as the classifying algorithm of~\cite{nguyen-2025}.

For multi-block leakage, we prove the following.

\begin{restatethm}{thm:main-multi-block}{Multi-block leakage, improved bound}
Under the hypotheses of \Cref{thm:main-single-block}, fix an admissible multi-block leakage pattern with \(M < d - \log_p(2n)\) total \(\F_p\)-blocks leaked. Then the bad set has size \(\le n + n \cdot p^M\), and a good \(x_0\) always exists.

For universality over all admissible multi-block patterns with \(\le M\) blocks total, we get the bound \(|\Bad| \le n + n(dpe)^M\), and a good \(x_0\) exists when \(M = O(d/\log_p d)\).
\end{restatethm}
\noindent For the proof, we observe that the Vandermonde has nontrivial kernel, and the direct bound from \Cref{thm:multi-block-fixed-pattern} is used; obtaining a polynomial bound for this regime remains open (\Cref{subsec:discussion}).

\subsection{Comparison with prior work}\label{subsec:comparison}

\Cref{tab:comparison} summarizes the comparison with~\cite{MNPY24,nguyen-2025}, and a few gaps remain.  First, there is a factor-\(k\) loss in the maximum number of parties for large \(k\); we conjecture this loss is inherent to single-parameter constructions, but leave the question open (\Cref{subsec:discussion}).  Second, multi-block universality is limited to \(M = O(d/\log_p d)\) by the pattern enumeration bottleneck.  Closing these gaps, particularly by obtaining a polynomial bad-set bound, is an interesting direction for future work; see \Cref{subsec:discussion}.

\begin{table}[ht]
	\centering
	\renewcommand{\arraystretch}{1.3}
	\begin{tabularx}{\textwidth}{@{} >{\raggedright\arraybackslash}p{4.2cm} *{3}{>{\centering\arraybackslash}X} @{}}
		\hline
		\textbf{Aspect} & \cite{MNPY24} & \cite{nguyen-2025} & \textbf{This work} \\
		\hline
		Evaluation places & Random in \((\F^*)^n\) & Fixed; arbitrary & \(\Phi^j(x_0)\) orbit \\
		Multipliers \(v_i\) (in shares \(v_i P(X_i)\)) & All \(1\) & Random in \((\F^*)^n\) & All \(1\) \\
		Security guarantee & \(\eps = 2^{-\Omega(d)}\) & \(\eps = 2^{-\Omega(d)}\) & \(\SD = 0\) for \(x_0 \notin \Bad\) \\
		Max parties (general \(k\)) & \(O(dk/\log_p d)\) & \(O(dk/\log_p d)\) & \(O(d/\log_p d)\) \\
		Multi-block, \(M < d\) & \(\eps = 2^{-\Omega(d)}\) & \(\eps = 2^{-\Omega(d)}\) & Perfect (fixed pattern) \\
		Random bits in evaluation places & \(nd \log p\) & \(0\) & \(d \log p\) (for \(x_0\)) \\
		Random bits in multipliers & \(0\) & \(nd \log p\) & \(0\) \\
		\hline
	\end{tabularx}
	\caption{Comparison of our results with the randomized results of~\cite{MNPY24,nguyen-2025}.  Here \(2\le k\le n\).}\label{tab:comparison}
\end{table}

\subsection{Related work}\label{subsec:related-work}

We survey the most relevant prior results.

\paragraph*{Leakage-resilient secret sharing.}  \cite{BDIR18,BDIR21} proved that Shamir's scheme is leakage-resilient against arbitrary single-bit local leakage when \(k/n\) exceeds a constant threshold.  Subsequent works~\cite{MNPW22,KK23,Nguyen24} improved this threshold, with the current best being \(k \ge 0.69n\).  These results hold for \emph{all} evaluation places but require a large reconstruction-to-party ratio.  Among other lines of work that bring linear algebraic structure to bear on these problems, Koga and Abe~\cite{koga-abe-2026} determine tight bounds on the local leakage resilience of the additive \((n,n)\)-threshold scheme via the eigenvalues of circulant matrices.

The complementary regime (small \(k\) relative to \(n\)) is where the choice of evaluation places becomes critical.  \cite{MNPSW21} initiated the study of this regime over prime fields, and~\cite{MNPY24} extended it to composite fields.  Other constructions of leakage-resilient (non-Shamir) secret sharing, including those based on algebraic geometry codes, appear in~\cite{KMS19,MPSW21} and the references therein.

\paragraph*{Physical bit probing and the {\normalfont\cite{MNPY24}} framework.}  The closest prior randomized construction we build upon is~\cite{MNPY24}.  Their construction works over \(\F_{p^d}\) for any prime \(p\) and combines three ingredients:  \n(i) a Fourier-analytic upper bound on statistical distance in terms of dual GRS codewords and Fourier coefficients of block-indicator functions, \n(ii) a B\'ezout-type bound on simultaneous zeros of polynomial systems over composite fields~\cite{zhao-2012-exponential-sums,bafna-sudan-velusamy-xiang-2021-isolated}, and \n(iii) a generalized Vandermonde determinant analysis.

They also proved an exact security classifier for \(k = 2\).  Both the dichotomy at \(k = 2\) and its proof technique (linear algebra plus the cosets-of-subspaces structure) are the entry point for our test-matrix derivation.

\paragraph*{The dichotomy result of {\normalfont\cite{nguyen-2025}}.}  A perfect-or-completely-insecure dichotomy for any linear code-based secret-sharing scheme was given by~\cite{nguyen-2025}, over a binary extension field against arbitrary physical-bit leakage, along with a complete characterization of the insecure leakages via minimal codewords of the dual of the binary image.  A remark therein also explicitly extends both results to composite order fields \(\F_{p^d}\) and to subfield-coordinate leakage, which is the present setting.  The work also gives a classifying algorithm whose worst-case cost matches our verification cost (\Cref{pro:classifier-complexity}).  Beyond the dichotomy,~\cite{nguyen-2025} gives a Monte-Carlo construction of GRS-based linear code-based secret sharing with random multipliers and fixed evaluation places, achieving \(M \le (k-1)\lambda/\poly(\log \lambda)\) bits of physical-bit leakage tolerance with overwhelming probability over the multiplier choice.

Our contribution is complementary to~\cite{nguyen-2025} on the construction side.  We give a construction of evaluation places \(x_j = \Phi^j(x_0)\) using a suitable rational function \(\Phi\), that fixes both the multipliers (all equal to \(1\), as in standard Shamir's scheme) and the evaluation places to a one-parameter family, with the security question reducing (via the dichotomy of~\cite{nguyen-2025}) to a closed form bad-set bound on the base point \(x_0\).  The trade-off is on party count:~\cite{nguyen-2025} supports \(n = O(dk/\log_p d)\) parties; we support \(n = O(d/\log_p d)\), reflecting the cost of restricting to a single-parameter family.

\paragraph*{Explicit constructions over prime fields.}  \cite{HMNY24} constructed efficient classifiers for evaluation places over Mersenne and Fermat primes in the \(n = k\) regime, connecting leakage resilience to orthogonality of square wave functions and \(2\)-adic valuations of rational approximations.  They provide explicit secure evaluation places for \(n = k = 2\) and lift to \(n = k > 2\) via a Fourier-analytic theorem.  Their techniques are specific to prime fields, where bit extraction is a nonlinear map.  Over composite fields, the analogous map is \(\F_p\)-linear, enabling a fundamentally different (and in several respects simpler) approach.

\paragraph*{Acknowledgment.}  The author thanks the anonymous reviewers of ITC 2026 for critical feedback, as well as for pointing to~\cite{nguyen-2025} and the dichotomy result therein.

\subsection{Technical overview}\label{subsec:technical-overview}

The proof proceeds in three phases.  We outline each in turn.

\paragraph*{Phase~1: The perfect dichotomy (\Cref{sec:prelim}).}  The key structural observation is that the coordinate extraction map \((\cdot)_i : \F \to \F_p\), defined by \((x)_i = \Tr(x \cdot f_i^*)\) where \(\{f_i^*\}\) is the dual basis with respect to the field trace, is \(\F_p\)-linear.  It follows that the leakage map \(L_{\vec{i}} : \F^{k-1} \to \F_p^n\) is \(\F_p\)-linear.

This has a clean consequence.  By the structure of cosets of \(\F_p\)-subspaces, the leakage distribution for secret \(s\) is uniform over a coset of \(\Img(L_{\vec{i}})\).  Two cosets are either identical or disjoint, yielding the dichotomy \(\SD \in \{0, 1\}\).  Consequently, surjectivity of \(L_{\vec{i}}\), equivalently a test matrix \(\Theta_{\vec{i}}\) having full column rank \(n\) over \(\F_p\), is a \emph{sufficient} condition for perfect security against the pattern \(\vec{i}\), since it forces every coset to equal \(\F_p^n\).  (Perfect security against \(\vec{i}\) is in fact the weaker condition that the secret's coordinate vector \(((s)_{i_j})_{j}\) lies in \(\Img(L_{\vec{i}})\) for every \(s\), which is the minimal codeword characterization of~\cite[Theorem~6]{nguyen-2025}; see also~\cite[Theorem~5]{MNPY24} for the \(k=2\) case.  We use only the sufficient direction.)  An alternative derivation of the dichotomy \(\SD\in\{0,1\}\) in our setting is given in~\Cref{app:dichotomy-proofs}.  The rank criterion is the working tool for the rest of the proof.

\paragraph*{Phase~2: partial fraction nondegeneracy (\Cref{sec:single-block}).}  The test matrix \(\Theta_{\vec{i}}\) encodes the map \(\vec{c} \mapsto (\sum_j c_j \eta^{(i_j)} x_j^\ell)_{\ell=1}^{k-1}\).  To show full rank, we must prove that no nontrivial \(\F_p\)-linear combination \(G_\ell(x) = \sum_j c_j \eta^{(i_j)} (\Phi^j(x))^\ell\) vanishes identically.

The argument is a partial fraction analysis.  Since the M\"obius transforms~\(\Phi^0, \Phi^1, \ldots, \Phi^{n-1}\) have pairwise distinct poles (\Cref{cor:distinct-poles}), the residue of~\(G_\ell\) at each pole is determined by a single summand.  This pole-distinctness is the unique structural property of the orbit on which the entire argument hinges; \Cref{rem:dilation-fails} shows that the naive alternative~\(\Psi(x) = \alpha x\) fails exactly here.  If \(G_\ell \equiv 0\), all coefficients must vanish.  The resulting zero count \(\le n\ell\) at the smallest power \(\ell = 1\) gives \(\le n\) bad base points per coefficient vector, and a union bound over all patterns and coefficients yields the bad set of \Cref{thm:main-single-block}.

\paragraph*{Phase~3: Multi-block extension (\Cref{sec:multi-block}).}  For multi-block leakage, the kernel condition involves \(\F\)-valued coefficients \(d_j = \sum_r c_{j,r} \eta^{(i_j^{(r)})}\) arising from within-share linear combinations.  The admissibility of the leakage pattern ensures that \(d_j \ne 0\) whenever the corresponding \(c_{j,r}\) are not all zero.  Once this is established, the partial fraction argument extends directly to \(\F\)-coefficients.

\subsection{Discussion and open problems}\label{subsec:discussion}

It is worth recording what our proofs will actually rely on regarding the step operator \(\Phi\).
\begin{enumerate}[(1)]
	\item \emph{Rational function structure:}  each \(\Phi^j\) is a degree-\(1\) rational function, which will enable the partial fraction analysis in \Cref{sec:single-block}.
	\item \emph{Distinct poles:}  \(\Phi^0, \ldots, \Phi^{n-1}\) have pairwise distinct poles in \(\Pp(\F)\), which is the hypothesis driving our nondegeneracy lemma.
\end{enumerate}
%\n(3) \emph{Explicit pairwise differences:}  the factor \((x_0 - (\alpha - 1))\) appears in every difference \(\Phi^{j'}(x_0) - \Phi^j(x_0)\), making the Vandermonde determinant nonzero at all non-excluded base points.
%This eliminates the union bound for \(n \le k - 1\) (\Cref{sec:function-field}).

%Our proofs make no appeal to the Ramanujan spectral gap, the expander mixing lemma, character sum bounds, or any equidistribution property.
%These tools are available for torus-based constructions (M\"obius maps on \(\mu_{q+1}\)) and could potentially be brought to bear on the open problems below.
%
%\paragraph*{Beyond Shamir.}  The linear-algebraic perspective taken here -- reducing perfect security to a rank condition on a test matrix derived from the leakage map \(L_{\vec{i}}\) -- depends only on the \(\F_p\)-linearity of coordinate extraction and the \(\F_p\)-linearity of the share map.  It should therefore extend to a broader class of algebraic secret-sharing schemes over composite order fields, including monotone span programs, where the analogous rank criterion would govern leakage resilience.  We leave a systematic development of this direction to future work.

Several natural questions remain.

\begin{openquestion}[Universal multi-block for \(2\le k\le n\) and \(M = \omega(d/\log_p d)\)]
	The pattern enumeration bottleneck limits universality to \(M = O(d/\log_p d)\).  Can this be overcome?
\end{openquestion}

\begin{openquestion}[Polynomial-size bad set]
	For a fixed admissible pattern, the Vandermonde matrix of evaluation places has nontrivial kernel, and the bad-set bound \(\le n \cdot p^M\) from \Cref{thm:multi-block-fixed-pattern} is exponential in \(M\). Can this be improved to a polynomial bound?
\end{openquestion}
%
%\begin{openquestion}[Statistical security for large leakage]
%Can the Ramanujan spectral gap on algebraic tori be used to achieve statistical security \(\eps = 2^{-\Omega(d)}\) for an explicit single-parameter construction, matching the tolerance of the randomized constructions of~\cite{MNPY24,nguyen-2025}?
%\end{openquestion}

\begin{openquestion}[Factor-\(k\) gap]
	The randomized constructions of~\cite{MNPY24,nguyen-2025} handle \(n = O(dk/\log_p d)\) parties; our single-parameter construction handles \(n = O(d/\log_p d)\).
	Can this be overcome?
\end{openquestion}

%\paragraph*{Phase~4: Vandermonde invertibility (\Cref{sec:function-field}).}  The multi-block bad-set bound from Phase~3 is exponential in \(M\).  The key identity is that
%\[
%\Phi^{j'}(x_0) - \Phi^j(x_0) = \frac{x_0(\alpha^j - \alpha^{j'})(x_0 - (\alpha - 1))}{(\alpha - 1)(C_{j'} x_0 + 1)(C_j x_0 + 1)}.
%\]
%The factor \((x_0 - (\alpha - 1))\) appears in \emph{every} pairwise difference.  Consequently, the Vandermonde determinant of \(\Phi^0(x_0), \ldots, \Phi^{n-1}(x_0)\) is nonzero for all \(x_0 \in \F^* \setminus (\{\alpha - 1\} \cup \Poles(n))\), which is exactly the structural exclusion set.

%% ============================================================
%%  SECTION 2: PRELIMINARIES
%% ============================================================
\section{Preliminaries}\label{sec:prelim}

Throughout, \(p\) denotes a prime and \(d \ge 2\) an integer.  We write \(\F = \F_{p^d}\) and \(\F^* = \F \setminus \{0\}\).  The security parameter is \(\lambda = d \log_2 p\).

\subsection{The field and coordinates}\label{subsec:field}

Fix a primitive polynomial \(\Pi(t) \in \F_p[t]\) of degree \(d\).  Then \(\F = \F_p[t]/\Pi(t)\), and a root \(\alpha\) of \(\Pi\) satisfies \(\ord(\alpha) = p^d - 1\), i.e., \(\alpha\) generates \(\F^*\).  The canonical basis of \(\F\) over \(\F_p\) is \(\mc{B} = \{1, \alpha, \alpha^2, \ldots, \alpha^{d-1}\}\).  Every \(x \in \F\) is uniquely written as \(x = \sum_{i=0}^{d-1} x_i \alpha^i\) with \(x_i \in \F_p\), and we define the \(i\)-th \(\F_p\)-coordinate of \(x\) by \((x)_i := x_i\).

\begin{definition}[Dual basis and shifting factors]\label{def:dual-basis}
Let \(\Tr := \Tr_{\F/\F_p}\) denote the field trace.  The \emph{dual basis} of \(\mc{B}\) with respect to \(\Tr\) is the unique basis \(\{f_0^*, f_1^*, \ldots, f_{d-1}^*\}\) of \(\F\) over \(\F_p\) satisfying \(\Tr(\alpha^i \cdot f_j^*) = \delta_{ij}\) for all \(i, j\).  The \emph{shifting factors} are \(\eta^{(i)} := f_i^* / f_0^*\) for \(i = 0, 1, \ldots, d-1\).
\end{definition}

In particular, \(\eta^{(0)} = 1\).  The dual basis exists and is unique because the Gram matrix \(G_{ij} = \Tr(\alpha^i \cdot \alpha^j)\) is invertible (by nondegeneracy of the trace form).
\begin{fact}[Coordinates via the trace]\label{fac:coord-trace}
	For all \(x \in \F\) and all \(i \in \{0, \ldots, d-1\}\) we have \((x)_i = \Tr(x \cdot f_i^*)\).  In particular each coordinate map \((\cdot)_i : \F \to \F_p\) is \(\F_p\)-linear.
\end{fact}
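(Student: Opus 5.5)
The statement is \Cref{fac:coord-trace}, and the proof is a short linear-algebra argument from \Cref{def:dual-basis}.

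Fix \(x \in \F\) and write it in the canonical basis as \(x = \sum_{m=0}^{d-1} x_m \alpha^m\) with \(x_m \in \F_p\). Multiply by \(f_i^*\) and apply the trace. The trace \(\Tr = \Tr_{\F/\F_p}\) is \(\F_p\)-linear, being the sum of the Galois conjugates \(y \mapsto y^{p^r}\), each of which is additive and fixes \(\F_p\). So
\[
\Tr(x f_i^*) = \sum_{m=0}^{d-1} x_m \Tr(\alpha^m f_i^*) = \sum_{m=0}^{d-1} x_m \delta_{mi} = x_i = (x)_i ,
\]
using the defining property \(\Tr(\alpha^m f_i^*) = \delta_{mi}\) of the dual basis.

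Linearity of \((\cdot)_i\) follows at once. The map is the composition of multiplication by the fixed element \(f_i^*\), which is \(\F\)-linear and hence \(\F_p\)-linear, with the \(\F_p\)-linear map \(\Tr\). It can also be read off directly, since taking the coefficient in a fixed basis is linear.

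There is no real obstacle. The only points to make explicit are the existence of the dual basis, which the text already justifies by invertibility of the Gram matrix, and the \(\F_p\)-linearity of the trace, which follows from the Frobenius being additive and fixing \(\F_p\).
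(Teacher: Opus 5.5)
Your proof is correct and follows the paper's argument: you expand \(x\) in the canonical basis, pull the \(\F_p\)-coefficients out of the trace, apply \(\Tr(\alpha^m f_i^*) = \delta_{mi}\), and get linearity from that of \(\Tr\). Your Frobenius justification of the trace's \(\F_p\)-linearity is a detail the paper leaves implicit, and it does no harm.
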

\begin{proof}
	Write \(x = \sum_{m=0}^{d-1} (x)_m \alpha^m\) with \((x)_m \in \F_p\).  Since \(\Tr\) is \(\F_p\)-linear and \(\Tr(\alpha^m \cdot f_i^*) = \delta_{mi}\) (\Cref{def:dual-basis}), we get \(\Tr(x \cdot f_i^*) = \sum_{m=0}^{d-1} (x)_m \Tr(\alpha^m \cdot f_i^*) = (x)_i\).  Linearity is then immediate from that of \(\Tr\).
\end{proof}

\begin{proposition}[Shift property]\label{pro:shift}
For all \(x \in \F\) and all \(i \in \{0, \ldots, d-1\}\),
\[
(x)_i = (x \cdot \eta^{(i)})_0.
\]
\end{proposition}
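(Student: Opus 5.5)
The proof is a direct computation with the trace formula of \Cref{fac:coord-trace}. Since \(\eta^{(i)} = f_i^*/f_0^*\), the definition of the shifting factors gives \(\eta^{(i)} \cdot f_0^* = f_i^*\). This identity is the whole content of the statement: multiplying by \(\eta^{(i)}\) converts the \(0\)-th dual basis element into the \(i\)-th one.

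Apply \Cref{fac:coord-trace} with index \(0\) to the element \(x\eta^{(i)} \in \F\). This gives \((x\eta^{(i)})_0 = \Tr(x\eta^{(i)} f_0^*)\). Rewriting \(\eta^{(i)} f_0^* = f_i^*\) inside the trace yields \(\Tr(x f_i^*)\). A second application of \Cref{fac:coord-trace}, now with index \(i\), identifies this with \((x)_i\). That is, for every \(x \in \F\),
\[
(x\eta^{(i)})_0 = \Tr\bigl(x\eta^{(i)} f_0^*\bigr) = \Tr\bigl(x f_i^*\bigr) = (x)_i .
\]

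There is essentially no obstacle. The only point to check is that \(\eta^{(i)}\) is well defined, which requires \(f_0^* \ne 0\). This holds because \(\{f_0^*, \ldots, f_{d-1}^*\}\) is a basis of \(\F\) over \(\F_p\), so none of its elements is zero. Commutativity of multiplication in \(\F\) then justifies the regrouping \(x \cdot \eta^{(i)} \cdot f_0^* = x \cdot f_i^*\).
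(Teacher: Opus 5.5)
Your proof is correct and matches the paper's argument: apply \Cref{fac:coord-trace} at index \(0\) to \(x\eta^{(i)}\), substitute \(\eta^{(i)} f_0^* = f_i^*\) inside the trace, and apply \Cref{fac:coord-trace} again at index \(i\). Your check that \(f_0^* \ne 0\), so that \(\eta^{(i)}\) is well defined, is a detail the paper leaves implicit at this point and only states later.
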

\begin{proof}
We compute \((x \cdot \eta^{(i)})_0 = \Tr(x \cdot \eta^{(i)} \cdot f_0^*) = \Tr(x \cdot (f_i^*/f_0^*) \cdot f_0^*) = \Tr(x \cdot f_i^*) = (x)_i\).
\end{proof}

\begin{lemma}[\(\F_p\)-independence of shifting factors]\label{lem:eta-independence}
The shifting factors \(\eta^{(0)}, \eta^{(1)}, \ldots, \eta^{(d-1)}\) are \(\F_p\)-linearly independent.  In fact, they form a basis of \(\F\) over \(\F_p\).
\end{lemma}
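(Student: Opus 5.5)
The plan is to reduce the statement to the fact that the dual basis \(\{f_0^*, \ldots, f_{d-1}^*\}\) is itself a basis of \(\F\) over \(\F_p\). Multiplication by the fixed nonzero scalar \((f_0^*)^{-1}\) is an \(\F_p\)-linear bijection of \(\F\), and it carries \(f_i^*\) to \(\eta^{(i)}\). So we only need to check two things: that \(f_0^* \ne 0\), which makes \(\eta^{(i)}\) well defined, and that the \(f_i^*\) are \(\F_p\)-linearly independent.

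For independence, suppose \(\sum_{i} c_i f_i^* = 0\) with \(c_i \in \F_p\). For each \(m\), multiply by \(\alpha^m\) and apply \(\Tr\). By \(\F_p\)-linearity of the trace and the defining relation \(\Tr(\alpha^m f_i^*) = \delta_{mi}\) from \Cref{def:dual-basis}, this gives \(0 = \sum_i c_i \delta_{mi} = c_m\). Hence all \(c_m\) vanish.

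Nonvanishing follows the same way: if \(f_0^* = 0\), then \(\Tr(\alpha^0 f_0^*) = 0\), contradicting \(\Tr(\alpha^0 f_0^*) = 1\).

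Now suppose \(\sum_i c_i \eta^{(i)} = 0\) with \(c_i \in \F_p\). Multiplying by \(f_0^*\) gives \(\sum_i c_i f_i^* = 0\), so every \(c_i = 0\). Thus the \(\eta^{(i)}\) are \(\F_p\)-linearly independent. Since there are \(d = \dim_{\F_p} \F\) of them, they form a basis.

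Alternatively, one could use \Cref{pro:shift}: the map \(x \mapsto ((x \eta^{(i)})_0)_i = ((x)_i)_i\) is injective, which also yields independence. The direct argument above is simpler. There is no real obstacle here; the only point needing care is confirming \(f_0^* \neq 0\), and the dual-basis relation gives that at once.
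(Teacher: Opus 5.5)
Your proof is correct and follows the same route as the paper: both obtain \(\{\eta^{(i)}\}\) from the dual basis \(\{f_i^*\}\) by multiplying by the nonzero scalar \(1/f_0^*\), which is an \(\F_p\)-linear automorphism of \(\F\). You also use the relation \(\Tr(\alpha^m f_i^*) = \delta_{mi}\) to check explicitly that the \(f_i^*\) are independent and that \(f_0^* \ne 0\); the paper takes both facts as given from the definition of the dual basis.
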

\begin{proof}
The dual basis \(\{f_0^*, \ldots, f_{d-1}^*\}\) is a basis of \(\F\) over \(\F_p\).  The map \(f_i^* \mapsto \eta^{(i)} = f_i^*/f_0^*\) is multiplication by the nonzero scalar \(1/f_0^* \in \F^*\), hence an \(\F_p\)-linear isomorphism of \(\F\).
\end{proof}

\subsection{Shamir's secret sharing and leakage models}\label{subsec:shamir}

\begin{definition}[Shamir's scheme]\label{def:shamir}
Let \(2\le k\le n\).  Given distinct evaluation places \(x_0, \ldots, x_{n-1} \in \F^*\), the scheme \(\ShamirSS(n, k, \vec{X})_\F\) shares a secret \(s \in \F\) as follows:  sample \(P(x) = s + a_1 x + \cdots + a_{k-1} x^{k-1}\) with \(a_1, \ldots, a_{k-1} \in \F\) uniform and independent, and output shares \(s_j = P(x_j)\) for \(j = 0, \ldots, n-1\).
\end{definition}

\begin{definition}[Block leakage]\label{def:block-leakage}
In the \emph{single-block leakage} model, the adversary learns one \(\F_p\)-coordinate \((s_j)_{i_j} \in \F_p\) from each share \(s_j\), where the block indices \(\vec{i} = (i_0, \ldots, i_{n-1}) \in \{0, \ldots, d-1\}^n\) are adversary-chosen (non-adaptively).  For \(p = 2\), each \(\F_p\)-block is a single bit, so block leakage coincides with physical bit leakage.
\end{definition}

\begin{definition}[Multi-block leakage]\label{def:multi-block-leakage}
In the \emph{multi-block leakage} model, the adversary learns \(m_j \ge 1\) distinct \(\F_p\)-coordinates from share \(j\), at positions \(i_j^{(1)}, \ldots, i_j^{(m_j)} \in \{0, \ldots, d-1\}\).  The total number of blocks leaked is \(M = \sum_{j=0}^{n-1} m_j\).  A multi-block pattern is \emph{admissible} if the block positions within each share are distinct.
\end{definition}

\begin{definition}[Insecurity]\label{def:insecurity}
The insecurity of \(\ShamirSS(n, k, \vec{X})_\F\) against a leakage family \(\mc{F}\) is
\[
\eps_{\mc{F}}(\vec{X}) := \max_{f \in \mc{F}} \max_{s \in \F^*} \SD(f(\Share(0)),\; f(\Share(s))),
\]
where \(\SD\) denotes statistical distance.  We say the scheme is \emph{perfectly secure} against \(\mc{F}\) if \(\eps_{\mc{F}}(\vec{X}) = 0\).
\end{definition}

\subsection{The perfect dichotomy}\label{subsec:dichotomy}

We now record the structural foundation that the rest of the analysis builds on:  the coordinate map \((\cdot)_i = \Tr(\cdot \, \cdot f_i^*)\) is \(\F_p\)-linear, which makes the leakage map linear, forces the leakage distribution to be uniform on a coset of an \(\F_p\)-subspace, and yields a perfect dichotomy for statistical distance.

\begin{proposition}[Leakage map]\label{pro:leakage-map}
For block leakage with indices \(\vec{i}\), the leakage vector \(\vec{\ell} = ((s_j)_{i_j})_{j=0}^{n-1}\) decomposes as
\[
\ell_j = (s)_{i_j} + \sum_{t=1}^{k-1} \Tr(a_t \cdot x_j^t \cdot f_{i_j}^*),
\]
where the map \(L_{\vec{i}} : \F^{k-1} \to \F_p^n\) defined by \(L_{\vec{i}}(a_1, \ldots, a_{k-1})_j = \sum_{t=1}^{k-1} \Tr(a_t \cdot x_j^t \cdot f_{i_j}^*)\) is \(\F_p\)-linear.  (Proof in~\Cref{app:dichotomy-proofs}.)
\end{proposition}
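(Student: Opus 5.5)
The plan is to expand each share as a polynomial in the coefficients and then push the coordinate extraction through using linearity. First I would write out the share of party \(j\):
\[
s_j = P(x_j) = s + \sum_{t=1}^{k-1} a_t x_j^t.
\]
Next I would apply \Cref{fac:coord-trace} with index \(i_j\), which gives \((s_j)_{i_j} = \Tr(s_j \cdot f_{i_j}^*)\). Because \(\Tr\) is \(\F_p\)-linear, and in particular additive, this splits as
\[
\Tr(s\, f_{i_j}^*) + \sum_{t=1}^{k-1} \Tr(a_t x_j^t f_{i_j}^*).
\]
Applying \Cref{fac:coord-trace} once more, the first term equals \((s)_{i_j}\). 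This yields the claimed decomposition of \(\ell_j\).

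For the linearity of \(L_{\vec{i}}\), note that \(\F^{k-1}\) is an \(\F_p\)-vector space under coordinatewise operations. I would check the \(j\)-th output coordinate separately. For fixed \(j\) and \(t\), the map \(a_t \mapsto a_t \cdot (x_j^t f_{i_j}^*)\) is multiplication by a fixed field element, so it is \(\F\)-linear and hence \(\F_p\)-linear. Composing with the \(\F_p\)-linear trace \(\F \to \F_p\) keeps it \(\F_p\)-linear. A finite sum of \(\F_p\)-linear maps \(\F^{k-1}\to\F_p\), each depending on one coordinate \(a_t\), is again \(\F_p\)-linear. Since every coordinate \(L_{\vec{i}}(\cdot)_j\) is \(\F_p\)-linear, so is the map into \(\F_p^n\).

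There is no real obstacle here; this is routine bookkeeping. The only point needing care is that the linearity is over \(\F_p\) and not over \(\F\), because \(\Tr\) is only \(\F_p\)-linear. Concretely, for \(c\in\F_p\) we have \(\Tr(c\,y)=c\,\Tr(y)\), since \(c\) is fixed by the Frobenius map.
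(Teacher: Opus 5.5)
Your proof is correct and follows essentially the same route as the paper's appendix proof: expand \(P(x_j)\), use \(\F_p\)-linearity of the coordinate/trace map together with \Cref{fac:coord-trace} to get the decomposition, then check \(\F_p\)-linearity of \(L_{\vec{i}}\) via \(\Tr(\lambda y)=\lambda\Tr(y)\) for \(\lambda\in\F_p\). The differences are cosmetic: you pass to the trace before splitting the sum, whereas the paper splits the coordinates first, and you argue linearity by composition rather than by directly verifying \(L_{\vec{i}}(\lambda\vec{a}+\vec{b})=\lambda L_{\vec{i}}(\vec{a})+L_{\vec{i}}(\vec{b})\).
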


\begin{proposition}[Perfect dichotomy]\label{pro:dichotomy}
For block leakage with indices \(\vec{i}\) and any two secrets \(s, s' \in \F\):
\[
\SD(\vec{\ell}|_s,\; \vec{\ell}|_{s'}) = \begin{cases} 0 & \tx{if } \vec{v}(s) - \vec{v}(s') \in \Img(L_{\vec{i}}) \\ 1 & \tx{if } \vec{v}(s) - \vec{v}(s') \notin \Img(L_{\vec{i}}) \end{cases}
\]
where \(\vec{v}(s) = ((s)_{i_0}, \ldots, (s)_{i_{n-1}})\).  In particular, if \(L_{\vec{i}}\) is surjective, then \(\SD = 0\) for all pairs of secrets.  (Proof in~\Cref{app:dichotomy-proofs}.)
\end{proposition}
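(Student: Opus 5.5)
The leakage vector is an \(\F_p\)-affine image of the uniform randomness, so its law is uniform on a coset of the image; the dichotomy follows from comparing the two cosets. We begin by fixing the secret \(s\). By \Cref{pro:leakage-map}, the leakage is
\[
\vec{\ell}|_s = \vec{v}(s) + L_{\vec{i}}(a_1,\ldots,a_{k-1}),
\]
where \((a_1,\ldots,a_{k-1})\) is uniform on \(\F^{k-1}\) and independent of \(s\). Since \(L_{\vec{i}}\) is an \(\F_p\)-linear map between finite \(\F_p\)-vector spaces, every fiber \(L_{\vec{i}}^{-1}(w)\) with \(w \in \Img(L_{\vec{i}})\) is a coset of \(\ker L_{\vec{i}}\). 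All these fibers therefore have the same cardinality \(|\ker L_{\vec{i}}|\). Pushing forward the uniform distribution gives that \(L_{\vec{i}}(\vec{a})\) is uniform on the subspace \(V := \Img(L_{\vec{i}})\). Translating, \(\vec{\ell}|_s\) is uniform on the coset \(\vec{v}(s) + V\).

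Next we compare two secrets \(s, s'\). Cosets of the same subspace \(V\) in \(\F_p^n\) are either equal or disjoint, and they are equal exactly when \(\vec{v}(s) - \vec{v}(s') \in V\).

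\begin{itemize}
\item If they are equal, both leakage distributions are the uniform distribution on the same finite set, so \(\SD = 0\).
\item If they are disjoint, the supports are disjoint. Taking the event \(E = \vec{v}(s) + V\) gives \(\Pr[\vec{\ell}|_s \in E] - \Pr[\vec{\ell}|_{s'} \in E] = 1 - 0\), so \(\SD = 1\).
\end{itemize}

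Finally, if \(L_{\vec{i}}\) is surjective then \(V = \F_p^n\). The membership condition then holds trivially for every pair, so \(\SD = 0\) for all pairs of secrets.

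No step is a genuine obstacle. The only point needing care is the uniformity claim, which rests on the equal-fiber-size argument above; it uses the \(\F_p\)-linearity from \Cref{pro:leakage-map} and not merely additivity of the map. Independence of \(\vec{a}\) from \(s\) is built into \Cref{def:shamir}.
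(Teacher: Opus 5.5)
Your proposal is correct and follows the paper's proof in Appendix~A essentially step for step: the leakage is uniform on the coset \(\vec{v}(s) + \Img(L_{\vec{i}})\), two cosets are identical or disjoint, and surjectivity collapses everything to \(\F_p^n\). You additionally spell out the equal-fiber justification for uniformity and the distinguishing event for \(\SD = 1\), which the paper leaves implicit. One minor quibble concerns your closing remark: that fiber argument needs only additivity, since fibers of any group homomorphism are cosets of its kernel, and over \(\F_p\) additivity already implies \(\F_p\)-linearity. So the contrast you draw there is unnecessary, though harmless.
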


\begin{remark}[More general result in previous work]\label{rem:MNPY-extension}
\cite[Theorem~6]{nguyen-2025} establishes \Cref{pro:dichotomy} in much greater generality:  for any linear code-based secret-sharing scheme over an extension field \(\F_{p^\lambda}\), against arbitrary physical-bit leakage.  By a remark therein, the result also applies to composite order fields \(\F_{p^d}\) and to subfield-coordinate leakage, which is our present setting.  The \(k = 2\) case is also implicit, via linear algebra, in~\cite[Theorem~5]{MNPY24}.  For completeness, proofs of \Cref{pro:leakage-map,pro:dichotomy} in our setting are deferred to~\Cref{app:dichotomy-proofs};  the rest of the paper depends only on the conclusion of~\Cref{pro:dichotomy}.
\end{remark}

\subsection{Surjectivity and the test matrix}\label{subsec:test-matrix}

\begin{proposition}[Dual characterization]\label{pro:dual}
The map \(L_{\vec{i}}\) is surjective if and only if the adjoint \(L_{\vec{i}}^* : \F_p^n \to \F^{k-1}\) is injective.  The adjoint is given by
\[
(L_{\vec{i}}^*(\vec{c}))_\ell = f_0^* \cdot \sum_{j=0}^{n-1} c_j \cdot \eta^{(i_j)} \cdot x_j^\ell, \qquad \ell = 1, \ldots, k-1.
\]
Since \(f_0^* \ne 0\), injectivity of \(L_{\vec{i}}^*\) is equivalent to:  for all \(\vec{c} \in \F_p^n \setminus \{\vec{0}\}\), there exists \(\ell \in \{1, \ldots, k-1\}\) with \(\sum_j c_j \eta^{(i_j)} x_j^\ell \ne 0\).
\end{proposition}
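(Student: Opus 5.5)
We will prove \Cref{pro:dual} in two steps: the general duality between surjectivity and injectivity of the adjoint, and then an explicit computation of the adjoint through the trace pairing.

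\textbf{Step 1: duality.} Equip \(\F_p^n\) with the standard dot product and \(\F^{k-1}\) with the \(\F_p\)-bilinear form
\[
\langle \vec{a}, \vec{b}\rangle := \sum_{t=1}^{k-1}\Tr(a_t b_t).
\]
This form is nondegenerate because the trace form on \(\F\) is nondegenerate, which the paper already used to justify existence of the dual basis. Define \(L_{\vec{i}}^*\) as the unique \(\F_p\)-linear map satisfying
\[
\vec{c}\cdot L_{\vec{i}}(\vec{a}) = \langle \vec{a}, L_{\vec{i}}^*(\vec{c})\rangle
\]
for all \(\vec{a}\) and \(\vec{c}\). Finite-dimensional linear algebra then gives \(\Img(L_{\vec{i}})^\perp = \ker(L_{\vec{i}}^*)\). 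Hence \(L_{\vec{i}}\) is surjective if and only if \(\Img(L_{\vec{i}})^\perp = 0\), if and only if \(L_{\vec{i}}^*\) is injective.

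\textbf{Step 2: the adjoint formula.} Start from the formula for \(L_{\vec{i}}\) in \Cref{pro:leakage-map} and use \(\F_p\)-linearity of the trace, pulling each \(c_j\in\F_p\) inside:
\[
\vec{c}\cdot L_{\vec{i}}(\vec{a}) = \sum_j c_j\sum_t \Tr(a_t x_j^t f_{i_j}^*) = \sum_t \Tr\Big(a_t \sum_j c_j f_{i_j}^* x_j^t\Big).
\]
Now substitute \(f_{i_j}^* = f_0^*\eta^{(i_j)}\) from \Cref{def:dual-basis}. The \(\ell\)-th component of \(L_{\vec{i}}^*(\vec{c})\) is then \(f_0^*\sum_j c_j\eta^{(i_j)}x_j^\ell\), as claimed. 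This identification is legitimate because nondegeneracy of \(\langle\cdot,\cdot\rangle\) makes the adjoint unique.

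\textbf{Step 3: the final equivalence.} Since \(f_0^*\ne 0\), a component of \(L_{\vec{i}}^*(\vec{c})\) vanishes exactly when the corresponding sum \(\sum_j c_j\eta^{(i_j)}x_j^\ell\) vanishes. So \(L_{\vec{i}}^*\) is injective if and only if every nonzero \(\vec{c}\in\F_p^n\) has some \(\ell\in\{1,\ldots,k-1\}\) with \(\sum_j c_j\eta^{(i_j)}x_j^\ell\ne0\).

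The only step with any subtlety is checking that \(\langle\cdot,\cdot\rangle\) is nondegenerate. If \(\langle\vec{a},\vec{b}\rangle=0\) for all \(\vec{a}\), take \(\vec{a}\) supported on a single coordinate \(t\); then \(\Tr(a_t b_t)=0\) for all \(a_t\in\F\), which forces \(b_t=0\).
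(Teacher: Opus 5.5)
Your proposal is correct and follows essentially the same route as the paper: the same two pairings, the identity $\ker(L_{\vec{i}}^*) = \Img(L_{\vec{i}})^\perp$, the same trace computation with the substitution $f_{i_j}^* = \eta^{(i_j)} f_0^*$, and the final appeal to $f_0^* \ne 0$. The only difference is that you explicitly check nondegeneracy of the trace pairing on $\F^{k-1}$, while the paper instead spells out the dimension count $\dim \Img(L_{\vec{i}})^\perp = n - \dim \Img(L_{\vec{i}})$, which comes from nondegeneracy of the dot product on $\F_p^n$ and which your ``finite-dimensional linear algebra'' step leaves implicit.
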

\begin{proof}
For an \(\F_p\)-linear map \(L : V \to W\) between finite-dimensional \(\F_p\)-vector spaces, equipped with nondegenerate \(\F_p\)-bilinear pairings, \(L\) is surjective if and only if its adjoint \(L^*\) is injective.  Indeed \(\ker(L^*) = \Img(L)^\perp\), and nondegeneracy gives \(\dim(\Img(L)^\perp) = \dim(W) - \dim(\Img(L))\), so \(\Img(L) = W\) iff \(\ker(L^*) = \{0\}\).  Both pairings below are nondegenerate. The standard one on \(\F_p^n\), and the trace pairing on \(\F^{k-1}\), because the trace form of \(\F/\F_p\) is nondegenerate.

For the formula, we compute the adjoint explicitly.  The relevant inner products are the standard \(\F_p\)-inner product \(\langle \vec{c}, \vec{y} \rangle_{\F_p} = \sum_j c_j y_j\) on \(\F_p^n\), and the trace pairing \(\langle \vec{a}, \vec{b} \rangle = \sum_{\ell=1}^{k-1} \Tr(a_\ell b_\ell)\) on \(\F^{k-1}\).  For any \(\vec{a} \in \F^{k-1}\) and \(\vec{c} \in \F_p^n\):
\begin{align*}
\langle \vec{c},\, L_{\vec{i}}(\vec{a}) \rangle_{\F_p} &= \sum_{j=0}^{n-1} c_j \sum_{t=1}^{k-1} \Tr(a_t x_j^t f_{i_j}^*) \\
&= \sum_{t=1}^{k-1} \Tr\left(a_t \sum_{j=0}^{n-1} c_j x_j^t f_{i_j}^*\right),
\end{align*}
where we used \(\F_p\)-linearity of \(\Tr\) and the fact that \(c_j \in \F_p\) can be moved inside \(\Tr\).  Now substituting \(f_{i_j}^* = \eta^{(i_j)} \cdot f_0^*\):
\begin{align*}
&= \sum_{t=1}^{k-1} \Tr\left(a_t \cdot f_0^* \cdot \sum_{j=0}^{n-1} c_j \eta^{(i_j)} x_j^t\right) = \left\langle \vec{a},\, \left(f_0^* \cdot \sum_{j=0}^{n-1} c_j \eta^{(i_j)} x_j^\ell\right)_{\ell=1}^{k-1} \right\rangle.
\end{align*}
Since this holds for all \(\vec{a}\), we conclude \((L_{\vec{i}}^*(\vec{c}))_\ell = f_0^* \cdot \sum_{j} c_j \eta^{(i_j)} x_j^\ell\).

Finally, since \(f_0^* \ne 0\) (it is a dual basis element), \(L_{\vec{i}}^*(\vec{c}) = 0\) in \(\F^{k-1}\) if and only if \(\sum_j c_j \eta^{(i_j)} x_j^\ell = 0\) for every \(\ell = 1, \ldots, k-1\).
\end{proof}

\begin{definition}[Test matrix]\label{def:test-matrix}
For block indices \(\vec{i} \in \{0, \ldots, d-1\}^n\), the \emph{test matrix} \(\Theta_{\vec{i}} \in \F_p^{d(k-1) \times n}\) is the \(\F_p\)-coordinate representation of the map
\[
\vec{c} \mapsto \left(\sum_{j=0}^{n-1} c_j \eta^{(i_j)} x_j^\ell\right)_{\ell=1}^{k-1} \in \F^{k-1} \cong \F_p^{d(k-1)}.
\]
\end{definition}

\begin{corollary}\label{cor:surj-rank}
\(L_{\vec{i}}\) is surjective if and only if \(\rank_{\F_p}(\Theta_{\vec{i}}) = n\).  Since \(\Theta_{\vec{i}}\) has only \(d(k-1)\) rows, this (sufficient) full-rank condition can hold only when \(n \le d(k-1)\), the regime in which we work.
\end{corollary}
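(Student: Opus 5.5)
The plan is to chain \Cref{pro:dual} with elementary linear algebra over \(\F_p\), and then bound the rank by the number of rows.

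\emph{Step 1.} By \Cref{pro:dual}, \(L_{\vec{i}}\) is surjective if and only if the adjoint \(L_{\vec{i}}^*\) is injective. Write \(T_{\vec{i}} : \F_p^n \to \F^{k-1}\) for the map \(\vec{c} \mapsto \bigl(\sum_j c_j \eta^{(i_j)} x_j^\ell\bigr)_{\ell=1}^{k-1}\). The displayed formula in \Cref{pro:dual} says \(L_{\vec{i}}^* = f_0^* \cdot T_{\vec{i}}\), where the scalar multiplies every coordinate. Multiplication by \(f_0^* \ne 0\) is an \(\F_p\)-linear bijection of \(\F^{k-1}\). Hence \(\ker L_{\vec{i}}^* = \ker T_{\vec{i}}\), and injectivity of \(L_{\vec{i}}^*\) is equivalent to injectivity of \(T_{\vec{i}}\). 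This is exactly the last sentence of \Cref{pro:dual}.

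\emph{Step 2.} \(T_{\vec{i}}\) is \(\F_p\)-linear, since the \(c_j\) lie in \(\F_p\) and enter linearly. By \Cref{def:test-matrix}, \(\Theta_{\vec{i}}\) is its matrix once \(\F^{k-1}\) is identified with \(\F_p^{d(k-1)}\) via the basis \(\mc{B}\) in each of the \(k-1\) coordinates. This identification is an \(\F_p\)-linear isomorphism, so \(\ker T_{\vec{i}}\) corresponds exactly to the null space of \(\Theta_{\vec{i}}\) in \(\F_p^n\).

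By rank–nullity, the null space is trivial if and only if \(\rank_{\F_p}(\Theta_{\vec{i}})\) equals the number of columns, namely \(n\). Combining with Step 1 gives the equivalence \(L_{\vec{i}}\) surjective \(\iff \rank_{\F_p}(\Theta_{\vec{i}}) = n\).

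\emph{Step 3.} The rank of a matrix is at most its number of rows, so \(\rank_{\F_p}(\Theta_{\vec{i}}) \le d(k-1)\). Full column rank \(n\) therefore forces \(n \le d(k-1)\). As a sanity check from the other side, \(L_{\vec{i}}\) maps an \(\F_p\)-space of dimension \(d(k-1)\) into \(\F_p^n\), so it cannot be surjective unless \(n \le d(k-1)\).

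There is no real obstacle. The only point needing care is Step 1: it uses that the scalar \(f_0^*\) multiplies each coordinate of \(\F^{k-1}\) uniformly, so it cannot alter the kernel.
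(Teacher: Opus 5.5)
Your proof is correct and follows the paper's route. You invoke \Cref{pro:dual} to reduce surjectivity of \(L_{\vec{i}}\) to injectivity of the adjoint, identify that kernel with the null space of \(\Theta_{\vec{i}}\) using \(f_0^* \ne 0\), and bound the rank by the \(d(k-1)\) rows; you simply make explicit the coordinate-isomorphism and rank--nullity steps that the paper leaves implicit.
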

\begin{proof}
By \Cref{pro:dual}, \(L_{\vec{i}}\) is surjective if and only if \(L_{\vec{i}}^*\) is injective, which holds if and only if the \(\F_p\)-linear map encoded by \(\Theta_{\vec{i}}\) has trivial kernel, i.e., \(\rank_{\F_p}(\Theta_{\vec{i}}) = n\).  Since \(\Theta_{\vec{i}}\) has \(d(k-1)\) rows, full column rank requires \(n \le d(k-1)\).
\end{proof}

%% ============================================================
%%  SECTION 3: THE CONSTRUCTION
%% ============================================================
\section{The construction}\label{sec:construction}

We now define the step operator and establish the properties of the orbit \(\{x_j = \Phi^j(x_0)\}\) that will be needed for the security analysis.  The required properties of this operator follow from elementary algebra, with no conditions to verify beyond \(x_0 \notin \{\alpha - 1\} \cup \Poles(n)\).

\subsection{The step operator}\label{subsec:step-operator}

\begin{definition}[Step operator]\label{def:step-operator}
Define \(\Phi : \Pp(\F) \to \Pp(\F)\) by
\[
\Phi(x) = \frac{\alpha x}{x + 1},
\]
where \(\alpha\) is the generator of \(\F^*\) from~\Cref{subsec:field}.  As a M\"obius transformation, \(\Phi\) has matrix \(M_\Phi = \begin{pmatrix} \alpha & 0 \\ 1 & 1 \end{pmatrix}\) with \(\det(M_\Phi) = \alpha \ne 0\).
\end{definition}

\begin{proposition}[Projective order]\label{pro:proj-order}
The projective order of \(\Phi\) equals \(\ord(\alpha) = p^d - 1\).
\end{proposition}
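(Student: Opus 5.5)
The plan is to work entirely in \(\PGL_2(\F)\): the projective order of \(\Phi\) is the order of the class of \(M_\Phi = \begin{pmatrix} \alpha & 0 \\ 1 & 1 \end{pmatrix}\) modulo scalar matrices. We want the smallest \(m \ge 1\) with \(M_\Phi^m = \mu I\) for some \(\mu \in \F^*\).

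The first step is to diagonalize. The eigenvalues of \(M_\Phi\) are \(\alpha\) and \(1\). These are distinct, since \(\alpha\) generates \(\F^*\), which has order \(p^d-1\ge 3\), so \(\alpha\neq 1\). Concretely, the eigenvector for \(1\) is \((0,1)^T\), and the eigenvector for \(\alpha\) is \((\alpha-1,1)^T\). Hence \(M_\Phi = S\,\mathrm{diag}(\alpha,1)\,S^{-1}\) for an invertible \(S\), and therefore
\[
M_\Phi^m = S\,\mathrm{diag}(\alpha^m,1)\,S^{-1}.
\]
Equivalently, \(\Phi\) is conjugate as a Möbius map to the dilation \(y\mapsto \alpha y\). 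Its fixed points are \(0\) and \(\alpha-1\), so the conjugating change of coordinates is \(y = x/(x-(\alpha-1))\).

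The second step is to read off the order. We have \(M_\Phi^m = \mu I\) if and only if \(\mathrm{diag}(\alpha^m,1)=\mu I\), which holds if and only if \(\alpha^m = 1\) (with \(\mu=1\)). So the projective order is the least \(m\) with \(\alpha^m=1\), namely \(\ord(\alpha)=p^d-1\).

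As a sanity check, the explicit formula \(\Phi^m(x) = \alpha^m x/(C_m x+1)\), where \(C_m = (\alpha^m-1)/(\alpha-1)\), can be obtained by induction from \(M_\Phi^m = \begin{pmatrix} \alpha^m & 0 \\ C_m & 1\end{pmatrix}\). From this matrix form, \(M_\Phi^m\) is scalar exactly when \(\alpha^m=1\); in that case \(C_m=0\) automatically, consistent with the diagonalization.

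The only point needing care is that \(\alpha\ne1\), which guarantees the eigenvalues are distinct and the matrix is diagonalizable. There is no real obstacle beyond this.
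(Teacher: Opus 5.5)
Your proof is correct and follows essentially the same route as the paper. Both arguments use that \(M_\Phi\) has the distinct eigenvalues \(\alpha \neq 1\), hence is diagonalizable, and conclude that \(M_\Phi^m\) is scalar exactly when \(\alpha^m = 1\). Your explicit form \(M_\Phi^m = \begin{pmatrix}\alpha^m & 0 \\ C_m & 1\end{pmatrix}\) is also a complete proof on its own, without any appeal to diagonalizability: this matrix is scalar iff \(\alpha^m = 1\), and in that case \(C_m = 0\) automatically.
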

\begin{proof}
The eigenvalues of \(M_\Phi = \begin{pmatrix}\alpha & 0 \\ 1 & 1\end{pmatrix}\) are \(\alpha\) and \(1\) (reading off the diagonal, since \(M_\Phi\) is lower triangular).  Hence \(M_\Phi^j\) has eigenvalues \(\alpha^j\) and \(1\).  Now \(\Phi^j = \id\) in \(\PGL_2(\F)\) if and only if \(M_\Phi^j = \lambda I\) for some \(\lambda \in \F^*\) (since projective transformations are defined up to scalar multiples of the matrix).  The condition \(M_\Phi^j = \lambda I\) requires both eigenvalues to be equal, i.e., \(\alpha^j = 1\).  Conversely, if \(\alpha^j = 1\), then both eigenvalues equal \(1\) and \(M_\Phi^j\) must be the identity (since \(M_\Phi\) is diagonalizable over the algebraic closure:  its eigenvalues \(\alpha\) and \(1\) are distinct for \(d \ge 2\), so \(M_\Phi\), and hence \(M_\Phi^j\), is diagonalizable).  The smallest positive \(j\) with \(\alpha^j = 1\) is \(\ord(\alpha) = p^d - 1\).
\end{proof}

\begin{proposition}[Explicit formula for iterates]\label{pro:iterate-formula}
For \(j \ge 1\),
\[
\Phi^j(x) = \frac{\alpha^j x}{C_j x + 1}, \qquad C_j := \frac{\alpha^j - 1}{\alpha - 1} = \sum_{i=0}^{j-1} \alpha^i.
\]
For \(j = 0\) the formula holds with \(C_0 = 0\), giving \(\Phi^0(x) = x\).  We use this convention throughout, so that \(\Phi^0\) is the unique iterate whose pole is at \(\infty\).
\end{proposition}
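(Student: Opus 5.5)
We argue by induction on \(j\), working with the matrix representation of Möbius transformations from \Cref{def:step-operator}: composition of Möbius maps corresponds to matrix multiplication in \(\PGL_2(\F)\). So it suffices to compute \(M_\Phi^j\) and read off the resulting rational function. Our claim is that
\[
M_\Phi^j = \begin{pmatrix} \alpha^j & 0 \\ C_j & 1 \end{pmatrix}, \qquad C_j = \sum_{i=0}^{j-1}\alpha^i,
\]
with the empty-sum convention \(C_0 = 0\). For \(j=0\) this is the identity matrix, and for \(j=1\) it is \(M_\Phi\) itself, since \(C_1 = 1\).

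The inductive step is a one-line multiplication:
\[
M_\Phi^{j+1} = M_\Phi \cdot M_\Phi^j = \begin{pmatrix} \alpha & 0 \\ 1 & 1\end{pmatrix}\begin{pmatrix} \alpha^j & 0 \\ C_j & 1\end{pmatrix} = \begin{pmatrix} \alpha^{j+1} & 0 \\ \alpha^j + C_j & 1\end{pmatrix}.
\]
Since \(\alpha^j + C_j = C_{j+1}\), the claim holds for \(j+1\). Equivalently, one can compose the rational functions directly: \(\Phi(\Phi^j(x)) = \alpha\cdot\frac{\alpha^j x}{C_j x+1}\big/\big(\frac{\alpha^j x}{C_j x+1}+1\big) = \frac{\alpha^{j+1}x}{(C_j+\alpha^j)x+1}\), after clearing the common denominator \(C_jx+1\). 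This identity holds in \(\F(x)\) as rational functions, and hence as maps on \(\Pp(\F)\).

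Finally, the closed form \(C_j = (\alpha^j-1)/(\alpha-1)\) is the geometric series sum. It is valid because \(\alpha \ne 1\): \(\alpha\) generates \(\F^*\) and \(p^d - 1 > 1\) since \(d \ge 2\).

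For the remark on poles, the pole of \(\Phi^j\) for \(j \ge 1\) is \(x = -1/C_j\). This requires \(C_j \ne 0\), which holds iff \(\alpha^j \ne 1\), i.e., for \(1 \le j < p^d - 1\). For \(j = 0\) we have \(C_0 = 0\), so \(\Phi^0\) has its only pole at \(\infty\). No step here is a real obstacle; the only care needed is the \(\alpha \ne 1\) and \(C_j \ne 0\) bookkeeping, which is where the hypothesis \(n \le p^d - 1\) enters later.
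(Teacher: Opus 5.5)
Your proof is correct and matches the paper's: both argue by induction on \(j\), and the key step in both is the identity \(\alpha^j + C_j = C_{j+1}\). The paper composes the rational functions directly, and your matrix computation in \(\PGL_2(\F)\) is a repackaging of that; your explicit checks that \(\alpha \ne 1\) and that \(C_j \ne 0\) for \(1 \le j < p^d - 1\) fill in details the paper leaves implicit for the closed form and the pole remark.
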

\begin{proof}
By induction on \(j\).  For \(j = 1\): \(C_1 = 1\) and \(\Phi(x) = \alpha x/(x + 1)\).  For the inductive step:
\[
\Phi^{j+1}(x) = \Phi(\Phi^j(x)) = \frac{\alpha \cdot \frac{\alpha^j x}{C_j x + 1}}{\frac{\alpha^j x}{C_j x + 1} + 1} = \frac{\alpha^{j+1} x}{(\alpha^j + C_j)x + 1}.
\]
It remains to verify that \(\alpha^j + C_j = C_{j+1}\).  Indeed, \(\alpha^j + (\alpha^j - 1)/(\alpha - 1) = (\alpha^{j+1} - \alpha^j + \alpha^j - 1)/(\alpha - 1) = (\alpha^{j+1} - 1)/(\alpha - 1) = C_{j+1}\).
\end{proof}

\subsection{Distinct poles, fixed points, and pole avoidance}\label{subsec:poles-fixed-points}

\begin{corollary}[Pairwise distinct poles]\label{cor:distinct-poles}
For \(0 \le j \le n - 1\) with \(n \le p^d - 1\), the M\"obius transforms \(\Phi^0, \Phi^1, \ldots, \Phi^{n-1}\) have pairwise distinct poles in \(\Pp(\F)\).  More precisely, the pole of \(\Phi^0 = \id\) is \(\infty\), and the pole of \(\Phi^j\) for \(j \ge 1\) is \(-(\alpha - 1)/(\alpha^j - 1)\).
\end{corollary}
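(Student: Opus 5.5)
The plan is to read off the poles directly from the explicit formula of \Cref{pro:iterate-formula} and then prove distinctness using only the fact that \(\alpha\) has multiplicative order \(p^d-1\).

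\textbf{Locating the poles.} For \(j=0\), \(\Phi^0=\id\) corresponds to the matrix \(I\), whose pole (the preimage of \(\infty\)) is \(\infty\) itself. For \(j\ge 1\), \Cref{pro:iterate-formula} gives the matrix representative \(\begin{pmatrix}\alpha^j & 0\\ C_j & 1\end{pmatrix}\) for \(\Phi^j\), with nonzero determinant \(\alpha^j\). The pole is the root of the denominator \(C_j x+1\), provided \(C_j\neq 0\). Here \(C_j=(\alpha^j-1)/(\alpha-1)\). Since \(d\ge 2\), \(\alpha\ne 1\), so \(\alpha-1\neq 0\). Since \(1\le j\le n-1<p^d-1=\ord(\alpha)\), we also have \(\alpha^j\neq 1\). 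Hence \(C_j\neq0\), and the pole is the finite point \(-1/C_j=-(\alpha-1)/(\alpha^j-1)\), which lies in \(\F^*\). In particular it differs from \(\infty\), the pole of \(\Phi^0\).

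\textbf{Distinctness of the finite poles.} Suppose \(1\le j<j'\le n-1\) and the poles coincide. Then \(C_j=C_{j'}\), so \(\alpha^j-1=\alpha^{j'}-1\), i.e. \(\alpha^{j'-j}=1\). But \(0<j'-j<p^d-1=\ord(\alpha)\), which is a contradiction. Together with the previous step, this shows all \(n\) poles are pairwise distinct in \(\Pp(\F)\).

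\textbf{Main obstacle.} There is essentially none. The only care needed is the boundary bookkeeping: we must use \(n\le p^d-1\) so that every exponent \(j\) and every difference \(j'-j\) stays strictly below \(\ord(\alpha)\), and \(d\ge2\) so that \(\alpha\ne1\) and the formula for \(C_j\) is well defined.
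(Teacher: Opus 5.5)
Your proposal is correct and follows the paper's proof: read the pole \(-1/C_j = -(\alpha-1)/(\alpha^j-1)\) off the iterate formula, then note that a coincidence of two finite poles forces \(\alpha^{j'-j}=1\), which is impossible because \(0<j'-j<p^d-1=\ord(\alpha)\). Your explicit check that \(C_j\neq 0\) (using \(\alpha\neq 1\) and \(\alpha^j\neq 1\)), which guarantees each non-identity pole is finite and hence distinct from \(\infty\), is a step the paper leaves implicit.
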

\begin{proof}
By \Cref{pro:iterate-formula}, the pole of \(\Phi^j\) for \(j \ge 1\) is at \(x = -1/C_j = -(\alpha - 1)/(\alpha^j - 1)\).  For \(1 \le j < j' \le n - 1\), these coincide if and only if \(\alpha^j = \alpha^{j'}\), i.e., \(j \equiv j' \pmod{p^d - 1}\).  Since \(0 < j < j' < p^d - 1\), this is impossible.  The pole of \(\Phi^0 = \id\) is \(\infty\), which is distinct from all finite poles.
\end{proof}

\begin{remark}[No involution]\label{rem:no-involution}
The map \(\Phi\) is never an involution for \(d \ge 2\), regardless of characteristic.  Indeed, \(\Phi^2 = \id\) would require \(\alpha^2 = 1\), but \(\ord(\alpha) = p^d - 1 \ge p^2 - 1 \ge 3\).  In particular, no special treatment is needed for \(p = 2\).
\end{remark}

\begin{remark}[A naive dilation fails]\label{rem:dilation-fails}
	Deriving motivation from constructions of polynomial codes with optimal list decoding properties (folded Reed-Solomon codes~\cite{guruswami-rudra-08}), the most obvious one-parameter candidate is the pure dilation~\(\Psi(x) = \alpha x\), whose iterates~\(\Psi^j(x) = \alpha^j x\) form an orbit~\(x_j = \alpha^j x_0\) on the multiplicative coset~\(x_0 \langle \alpha \rangle\).  We record here that this choice fails the structural property driving our analysis, and consequently does not admit the bad-set bound of \Cref{thm:single-block}.
	
	The functions~\(\Psi^j\) are affine, so each has its only pole at~\(\infty\).  Hence all~\(n\) iterates share a single common pole, and the pole-distinctness hypothesis of \Cref{cor:distinct-poles} fails completely (rather than failing for at most one~\(j\)).  The partial fraction argument of \Cref{sec:single-block} therefore has no effect, since the rational function~\(G_\ell(x) = \sum_j c_j \eta^{(i_j)} (\alpha^j x)^\ell = x^\ell \cdot \sum_j c_j \eta^{(i_j)} \alpha^{j \ell}\) is a monomial in~\(x\), and~\(G_\ell \equiv 0\) is equivalent to the \(x_0\)-independent condition~\(\sum_j c_j \eta^{(i_j)} \alpha^{j \ell} = 0\).  The starting point~\(x_0\) has dropped out of the rank condition entirely.  In other words, the test matrix~\(\Theta_{\vec{i}}\) is the same matrix for every~\(x_0 \in \F^*\), so the dilation scheme is either secure for all nonzero~\(x_0\) or for none -- there is no parameter to optimize and no closed-form bad-set bound to prove.  In particular, the central quantitative result~\(|\Bad| \le n + n(dp)^n\) of \Cref{thm:single-block} has no analoguefor~\(\Psi\).
	
	The role of the~\(+1\) in the denominator of~\(\Phi\) is thus to perturb each nontrivial iterate's pole away from~\(\infty\) to a distinct finite point of~\(\F^*\) (\Cref{cor:distinct-poles}), placing the orbit in generic position relative to the partial fraction decomposition.  This is the essential algebraic feature that distinguishes~\(\Phi\) from its affine cousin and makes the closed-form bad-set bound possible.
\end{remark}

\begin{lemma}[Fixed points]\label{lem:fixed-points}
For each \(1 \le j < p^d - 1\), the fixed points of \(\Phi^j\) in \(\Pp(\F)\) are exactly \(\{0, \alpha - 1\}\).  In particular, \(\ShortPeriod(n) := \{x_0 \in \F^* : \exists\, 1 \le j < n,\; \Phi^j(x_0) = x_0\} = \{\alpha - 1\}\).
\end{lemma}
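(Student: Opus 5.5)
To find the fixed points of \(\Phi^j\) for \(1 \le j < p^d - 1\), I will use the explicit formula of \Cref{pro:iterate-formula}, \(\Phi^j(x) = \alpha^j x/(C_j x + 1)\) with \(C_j = (\alpha^j - 1)/(\alpha - 1)\). This formula is valid for \(j \ge 1\).

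I begin with the point \(\infty\). The image \(\Phi^j(\infty)\) equals \(\alpha^j/C_j\) when \(C_j \ne 0\). For \(1 \le j < p^d - 1\) we have \(\alpha^j \ne 1\), so \(C_j \ne 0\), and \(\alpha^j/C_j\) is finite. Hence \(\infty\) is not a fixed point.

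Next, for finite \(x\) that is not the pole \(-1/C_j\), the fixed point equation \(\alpha^j x = x(C_j x + 1)\) is equivalent to
\[
x\bigl(C_j x - (\alpha^j - 1)\bigr) = 0 .
\]
The pole itself is excluded, since it maps to \(\infty\) and so is not fixed. The solutions are \(x = 0\) and \(x = (\alpha^j - 1)/C_j = \alpha - 1\). This step is legitimate because \(C_j \ne 0\). Neither solution is the pole: the pole \(-1/C_j\) is nonzero, and \(\alpha - 1 = -1/C_j\) would force \(\alpha^j - 1 = -(\alpha-1)^2/(\alpha-1)\cdot\ldots\). A cleaner route is to check directly that \(0\) and \(\alpha - 1\) are fixed points. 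Indeed \(\Phi(\alpha - 1) = \alpha(\alpha-1)/\alpha = \alpha - 1\), and \(\alpha \ne 0\) guarantees that \(\alpha - 1\) is not the pole of \(\Phi\). By induction, \(\Phi^j(\alpha-1) = \alpha - 1\) for every \(j\), and trivially \(\Phi^j(0) = 0\). Since a fixed point lies off the pole automatically, the equation \(x(C_jx - (\alpha^j-1)) = 0\) with \(C_j \ne 0\) has at most these two roots. The two roots are distinct because \(\alpha \ne 1\), which holds since \(\ord(\alpha) \ge 3\).

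For the \(\ShortPeriod(n)\) statement, I need \(1 \le j < n \le p^d - 1\), so each such \(j\) lies in the range already covered. A point \(x_0 \in \F^*\) with \(\Phi^j(x_0) = x_0\) must therefore lie in \(\{0, \alpha - 1\} \setminus \{0\} = \{\alpha - 1\}\), using that \(\alpha - 1 \ne 0\). Conversely, \(\alpha - 1\) is fixed by \(\Phi^1\). The case \(j = 1\) is available only when \(n \ge 2\), which holds because \(n \ge k \ge 2\). This gives \(\ShortPeriod(n) = \{\alpha - 1\}\).

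The main obstacle is purely bookkeeping: ensuring that \(C_j \ne 0\) for every \(j\) in range, and handling the projective points \(\infty\) and the pole correctly. All of this reduces to \(\alpha^j \ne 1\) for \(0 < j < p^d - 1\).
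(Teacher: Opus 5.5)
Your proof is correct and follows essentially the same route as the paper: use the closed form \(\Phi^j(x)=\alpha^j x/(C_jx+1)\) with \(C_j\ne 0\), rule out \(\infty\), and solve \(x\bigl(C_jx-(\alpha^j-1)\bigr)=0\) to get \(\{0,\alpha-1\}\). Two small differences are in your favor: the direct check \(\Phi(\alpha-1)=\alpha-1\), and your note that the \(\ShortPeriod\) claim needs \(2\le n\le p^d-1\), are slightly more careful than the paper's write-up. You should delete the abandoned, garbled sentence about \(\alpha-1=-1/C_j\); that condition reduces cleanly to \(\alpha^j=0\), which is impossible.
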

\begin{proof}
Fix \(1 \le j < p^d - 1\), so that \(\Phi^j \ne \id\) by \Cref{pro:proj-order}.  The fixed-point equation \(\Phi^j(x) = x\) reads \(\alpha^j x / (C_j x + 1) = x\).  For \(x = 0\): \(\Phi^j(0) = 0\), so \(0\) is a fixed point, but \(0 \notin \F^*\).  For \(x \ne 0\):  dividing both sides by \(x\) gives \(\alpha^j/(C_j x + 1) = 1\), hence \(x = (\alpha^j - 1)/C_j\).  Since \(C_j = (\alpha^j - 1)/(\alpha - 1)\), this gives \(x = \alpha - 1\).  For \(x = \infty\): \(\Phi^j(\infty) = \alpha^j / C_j \ne \infty\) (since \(C_j \ne 0\) for \(j < p^d - 1\)).  The fixed-point set \(\{0, \alpha - 1\}\) is independent of \(j\), so restricting to \(\F^*\) gives \(\ShortPeriod(n) = \{\alpha - 1\}\).
\end{proof}

\begin{definition}[Pole set]\label{def:pole-set}
Define \(\Poles(n) := \{-1/C_j : 1 \le j \le n-1\} \subset \F^*\).  These are pairwise distinct by \Cref{cor:distinct-poles}, so \(|\Poles(n)| = n - 1\).

We note that \(\Poles(n) \cap \{\alpha - 1\} = \emptyset\):  the equality \(\alpha - 1 = -(\alpha-1)/(\alpha^j - 1)\) would require \(\alpha^j = 0\), which is impossible.
\end{definition}

\begin{proposition}[Distinct evaluation places]\label{pro:distinct-eval}
For \(x_0 \in \F^* \setminus (\{\alpha - 1\} \cup \Poles(n))\) and \(n \le p^d - 1\), the evaluation places \(x_0, x_1, \ldots, x_{n-1}\) are pairwise distinct elements of \(\F^*\).
\end{proposition}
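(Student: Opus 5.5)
The plan has two parts: show each iterate \(x_j\) is a well-defined element of \(\F^*\), then show the iterates are pairwise distinct by reducing to the fixed-point classification of \Cref{lem:fixed-points}.

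\emph{Well-definedness and nonvanishing.} By \Cref{pro:iterate-formula}, \(x_j = \Phi^j(x_0) = \alpha^j x_0/(C_j x_0 + 1)\). Since \(x_0 \notin \Poles(n)\), we have \(C_j x_0 + 1 \ne 0\) for \(1 \le j \le n-1\), so \(x_j \in \F\). The numerator \(\alpha^j x_0\) is nonzero because \(x_0 \ne 0\), hence \(x_j \in \F^*\).

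A subtlety is that \(x_j\) is computed by iterating \(\Phi\), so I must check that no intermediate iterate hits the pole \(-1\) of \(\Phi\), i.e.\ lands on \(\infty\). As Möbius maps on \(\Pp(\F)\) the composition is consistent, and since the closed form already gives a finite value for each \(x_j\), no intermediate value can be \(\infty\). Alternatively, \(\Phi(\infty) = \alpha \neq \infty\), and \(x_{j}\) finite for all \(j<n\) settles it directly.

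\emph{Distinctness.} Suppose \(x_j = x_{j'}\) with \(0 \le j < j' \le n-1\). Since \(\Phi\) is a bijection of \(\Pp(\F)\), applying \(\Phi^{-j}\) gives \(\Phi^{j'-j}(x_0) = x_0\). Here \(1 \le j'-j < n \le p^d-1\), so \Cref{lem:fixed-points} forces \(x_0 \in \{0, \alpha-1\}\). This contradicts \(x_0 \in \F^*\setminus\{\alpha-1\}\).

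The only delicate point is justifying the cancellation of \(\Phi^{j}\) on \(\Pp(\F)\), which is where bijectivity of Möbius maps is used. Beyond that I expect no obstacle, as everything follows from the earlier lemmas.
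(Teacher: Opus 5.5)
Your proof is correct and follows essentially the same route as the paper. Finiteness and nonvanishing of each \(x_j\) come from the closed form together with \(x_0 \notin \Poles(n)\), and distinctness comes from reducing a coincidence \(x_j = x_{j'}\) to a fixed point of \(\Phi^{j'-j}\) and invoking \Cref{lem:fixed-points}. The only cosmetic difference is that you cancel \(\Phi^j\) by injectivity to land on \(x_0\) itself, whereas the paper keeps the fixed point at \(x_j\) and separately shows that the orbit never meets \(\alpha - 1\), which uses the same bijectivity.
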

\begin{proof}
We verify three properties.  \n(a) \emph{Each \(x_j \in \F^*\):}  \(x_j = 0\) if and only if \(x_0 = 0\) (since \(0\) is a fixed point of \(\Phi\)), which is excluded by \(x_0 \in \F^*\).  \(x_j = \infty\) if and only if \(x_0 = -1/C_j \in \Poles(n)\), which is excluded by hypothesis.  \n(b) \emph{The orbit avoids \(\alpha - 1\):}  Since \(\alpha - 1\) is a fixed point of \(\Phi\) itself (not just of \(\Phi^j\)), we have \(\Phi^{-1}(\{\alpha - 1\}) = \{\alpha - 1\}\), and hence \(x_j = \alpha - 1\) if and only if \(x_0 = \alpha - 1\), which is excluded.  \n(c) \emph{Pairwise distinct:}  \(x_j = x_{j'}\) for \(j < j'\) if and only if \(x_j\) is a fixed point of \(\Phi^{j'-j}\) in \(\F^*\).  By \Cref{lem:fixed-points}, the only such point is \(\alpha - 1\), and by~(b), \(x_j \ne \alpha - 1\).
\end{proof}

\begin{remark}\label{rem:structural-exclusions}
The total number of structural exclusions is \(|\{\alpha - 1\} \cup \Poles(n)| = n\).
\end{remark}

\begin{remark}[Canonical base point]\label{rem:alpha-starting-point}
The generator \(\alpha\) is a natural canonical candidate for the base point.  Two layers of the goodness condition should be distinguished.

\emph{Provably:}  \(\alpha\) avoids the structural exclusion set \(\{\alpha - 1\} \cup \Poles(n)\).  Indeed, \(\alpha \ne \alpha - 1\) (since \(1 \ne 0\)), and \(\alpha \in \Poles(n)\) would require \(\alpha^{j+1} = 1\) for some \(1 \le j \le n - 1\), which is impossible whenever \(n \le p^d - 2\).  Verification is \(O(n)\).

\emph{Not provably without verification:}  whether \(\alpha\) avoids the security-bad set in the headline single-block regime of \Cref{thm:single-block} requires running the classifier of \Cref{alg:classifier}.
\end{remark}

\subsection{Classifier and choice of base point}\label{subsec:classifier}

We now give an explicit classifier that, given a candidate base point \(x_0\), tests the \emph{sufficient} full-rank condition of~\Cref{cor:surj-rank} for the resulting evaluation places \(x_j = \Phi^j(x_0)\):  for each block-index tuple \(\vec{i} \in \{0, \ldots, d-1\}^n\) it checks whether the test matrix \(\Theta_{\vec{i}}\) has full column rank \(n\) over \(\F_p\), returning \textsc{good} only if all of them do.  A \textsc{good} verdict \emph{certifies} perfect security against single-block leakage, so the classifier is a \emph{sound} tester.  A \textsc{bad} \((\vec{i})\) verdict reports a pattern at which the full-rank certificate fails; since full rank is sufficient but not necessary (\Cref{pro:dichotomy}), this does not by itself certify insecurity.  On the structured M\"obius candidates we propose, this sound test is what we use to certify goodness; we do not claim it decides perfect security on arbitrary evaluation places, for which the exact criterion is the containment/minimal-codeword condition of~\cite[Theorem~6]{nguyen-2025}.

\begin{algorithm}[h]
\DontPrintSemicolon
\KwIn{prime \(p\), integer \(d \ge 2\), reconstruction threshold \(k \ge 2\), number of parties \(n \le d(k-1)\), field representation \(\F = \F_p[\alpha]\), candidate base point \(x_0 \in \F^*\).}
\KwOut{\textsc{good} (certifies perfect security), or \textsc{bad}\((\vec{i})\) where \(\vec{i}\) is a pattern at which the sufficient full-rank certificate fails.}
\BlankLine
\If{\(x_0 \in \{\alpha - 1\} \cup \Poles(n)\)}{\Return \textsc{bad}\((\bot)\) \tcp*{structural exclusion}}
Compute \(x_j \gets \Phi^j(x_0)\) for \(j = 0, \ldots, n - 1\)\;
Compute the dual basis \(\{f_0^*, \ldots, f_{d-1}^*\}\) and shifting factors \(\eta^{(i)} = f_i^* / f_0^*\)\;
\For{\(\vec{i} \in \{0, \ldots, d-1\}^n\)}{
    Build \(\Theta_{\vec{i}} \in \F_p^{d(k-1) \times n}\) as in \Cref{def:test-matrix}\;
    \If{\(\rank_{\F_p}(\Theta_{\vec{i}}) < n\)}{\Return \textsc{bad}\((\vec{i})\)}
}
\Return \textsc{good}\;
\caption{\(\textsc{Classify}(p, d, k, n, x_0)\): single-block-leakage classifier for the M\"obius orbit \(x_j = \Phi^j(x_0)\).}\label{alg:classifier}
\end{algorithm}

\begin{proposition}[Classifier complexity]\label{pro:classifier-complexity}
\Cref{alg:classifier} performs \(O(d^n \cdot n^2 \cdot dk)\) arithmetic operations over \(\F_p\), after preprocessing costing \(\poly(d, n, k, \log p)\) operations.
\end{proposition}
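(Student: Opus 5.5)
We will bound the cost of \Cref{alg:classifier} by splitting it into a one-time preprocessing phase and a main loop over patterns, and then multiply the per-pattern cost by the number of patterns.

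\emph{Preprocessing.} The structural check \(x_0 \in \{\alpha-1\}\cup\Poles(n)\) needs the \(n-1\) values \(C_j = \sum_{i<j}\alpha^i\). These come from the recurrence \(C_{j+1} = C_j + \alpha^j\) with \(O(n)\) field operations. Computing the orbit \(x_j = \Phi^j(x_0)\) by \(x_{j+1} = \alpha x_j/(x_j+1)\) takes another \(O(n)\) field operations, including one inversion each. Each operation in \(\F = \F_p[t]/\Pi(t)\) costs \(\poly(d,\log p)\) operations over \(\F_p\); inversion uses the extended Euclidean algorithm, and inversion in \(\F_p\) costs \(O(\log p)\).

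For the dual basis, we form the Gram matrix \(G_{ij} = \Tr(\alpha^{i+j})\). The traces come from Newton's identities on the coefficients of \(\Pi\), or from the trace of multiplication matrices, in \(\poly(d)\) time. We invert \(G\) in \(O(d^3)\) operations; the columns of \(G^{-1}\) give \(f_i^*\) in the basis \(\mc{B}\). We then obtain \(\eta^{(i)} = f_i^*/f_0^*\) with \(d\) more field operations.

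Next we precompute, for every \(j<n\), \(\ell\le k-1\) and \(i<d\), the \(\F_p\)-coordinate vector of \(\eta^{(i)}x_j^\ell\). This is \(ndk\) field products. The powers \(x_j^\ell\) are built incrementally, and each product is \(\poly(d,\log p)\). The whole preprocessing is therefore \(\poly(d,n,k,\log p)\).

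\emph{Main loop.} For a fixed \(\vec{i}\), column \(j\) of \(\Theta_{\vec{i}}\) is just the concatenation over \(\ell\) of the precomputed vectors for \((j,\ell,i_j)\). Assembling \(\Theta_{\vec{i}}\) therefore costs only \(O(nd(k-1))\) copies. We compute \(\rank_{\F_p}\) of this \(d(k-1)\times n\) matrix by Gaussian elimination on columns: there are at most \(n\) pivot steps, and each step updates \(O(n)\) columns of length \(d(k-1)\). This gives \(O(n^2 dk)\) operations over \(\F_p\), which dominates the assembly cost. There are \(d^n\) patterns, so the loop costs \(O(d^n n^2 dk)\) in total.

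The only point needing care is keeping the per-pattern cost free of field multiplications. The precomputation table is indexed by \(i\in\{0,\ldots,d-1\}\) rather than by the pattern, which is why it has size only \(ndk\) and is paid for once.
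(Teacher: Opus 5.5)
Your proposal is correct and follows the paper's proof. Like the paper, you build a one-time table of the \(dn(k-1)\) coordinate vectors of \(\eta^{(i)}x_j^\ell\), assemble each \(\Theta_{\vec{i}}\) by lookup, and compute its rank by Gaussian elimination in \(O(n^2dk)\) operations over \(\F_p\) for each of the \(d^n\) patterns. Your extra detail on the orbit recurrence and on computing the dual basis via the Gram matrix and Newton's identities simply expands what the paper summarizes as \(\poly(d,n,k,\log p)\) preprocessing.
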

\begin{proof}
The structural exclusion check takes \(O(n)\) time.

\emph{Preprocessing.}\quad  Compute the orbit \(x_0, \ldots, x_{n-1}\), the dual basis, and the shifting factors, and then tabulate \emph{once} the \(dn(k-1)\) field elements \(\eta^{(i)} x_j^\ell\) (\(0 \le i \le d-1\), \(0 \le j \le n-1\), \(1 \le \ell \le k-1\)) together with their \(\F_p\)-coordinate vectors.  This costs \(O(dnk)\) multiplications in \(\F\), i.e. \(\poly(d, n, k, \log p)\) operations over \(\F_p\), and is performed outside the main loop.

\emph{Main loop.}\quad  Iterate \(d^n\) times.  Each iteration assembles \(\Theta_{\vec{i}}\) by table lookup in \(O(ndk)\) operations over \(\F_p\), and computes its rank by Gaussian elimination in \(O(n^2 \cdot d(k-1)) = O(n^2 dk)\) operations over \(\F_p\).  No enumeration over coefficient vectors \(\vec{c} \in \F_p^n\) is needed, since the rank test handles all \(\vec{c}\) per pattern simultaneously.
\end{proof}

\Cref{alg:classifier} matches the cost of the classifying algorithm given by Nguyen~\cite[Fig.~1]{nguyen-2025} on the corresponding inputs.  Our contribution in this regard is therefore not a faster verification algorithm; rather, the classifier exists here as a tool to certify the structured Möbius candidates we propose, complementing the closed form bad-set bounds.

%\begin{remark}[Choosing \(x_0\) in practice]\label{rem:choosing-x0}
%Three options for the base point are available, with different verification costs.
%
%\emph{(a) Sub-regime \(n \le k - 1\):}  take \(x_0 = \alpha\) with no further check, by \Cref{cor:alpha-good-low-n}.  Verification is \(O(n)\) (only the structural-exclusion check is needed).
%
%\emph{(b) Headline regime, canonical candidate:}  take \(x_0 = \alpha\) and verify with~\Cref{alg:classifier}.  Whether \(\alpha\) is in the bad set is deterministic, depending on the choice of primitive polynomial \(\Pi\); but the bad-set bound \(|\Bad| \le n + n(dp)^n \ll p^d - 1\) (\Cref{thm:single-block}) shows that the bad set is a thin subset of \(\F^*\), so \(\alpha\) is good for most choices of \(\Pi\).  If \Cref{alg:classifier} returns \textsc{bad}, a different primitive polynomial can be chosen.
%
%\emph{(c) Headline regime, random candidate:}  pick \(x_0 \in \F^*\) uniformly at random and verify with~\Cref{alg:classifier}.  Verification succeeds with probability at least \(1 - (n + n(dp)^n)/(p^d - 1)\), which is \(1 - o(1)\) in our parameter range.
%
%Rejection sampling is \emph{not} the proposed strategy:  options~(a) and~(b) are deterministic recipes; option~(c) is offered as an alternative.  In all three cases, once \(x_0\) is fixed, the evaluation places \(x_j = \Phi^j(x_0)\) are determined by the field representation alone.
%\end{remark}

%% ============================================================
%%  SECTION 4: SINGLE-BLOCK SECURITY
%% ============================================================
\section{Single-block security}\label{sec:single-block}

The goal of this section is to prove that for all but a small number of base points \(x_0\), the test matrix \(\Theta_{\vec{i}}\) has full column rank for \emph{every} block-index tuple \(\vec{i}\).  The main tool is a partial fraction nondegeneracy lemma, which exploits the fact that the M\"obius transforms \(\Phi^0, \ldots, \Phi^{n-1}\) have pairwise distinct poles.  Throughout the section we write, for \(\vec{i} \in \{0, \ldots, d-1\}^n\), \(\vec{c} \in \F_p^n\) and \(\ell \ge 1\),
\[
G_\ell(x; \vec{i}, \vec{c}) := \sum_{j=0}^{n-1} c_j \, \eta^{(i_j)} \, (\Phi^j(x))^\ell \;\in\; \F(x),
\]
which is the function \(G_\ell\) of \Cref{lem:nondegeneracy} specialized to \(\psi_j = \Phi^j\) and \(\delta_j = \eta^{(i_j)}\).

\subsection{Nondegeneracy via partial fractions}\label{subsec:nondegeneracy}

\begin{lemma}[nondegeneracy]\label{lem:nondegeneracy}
Let \(\psi_0, \ldots, \psi_{n-1}\) be M\"obius transforms on \(\Pp(\F)\) with pairwise distinct poles.  Let \(\delta_0, \ldots, \delta_{n-1} \in \F^*\), \(c_0, \ldots, c_{n-1} \in \F_p\) with \(\vec{c} \ne \vec{0}\), and \(\ell \ge 1\).  Then
\[
G_\ell(x) := \sum_{j=0}^{n-1} c_j \delta_j (\psi_j(x))^\ell
\]
is not identically zero as a rational function on \(\Pp(\F)\).
\end{lemma}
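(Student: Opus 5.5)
Pick an index \(j^\ast\) with \(c_{j^\ast} \ne 0\). The plan is to show that \(G_\ell\) has a genuine pole at the pole \(\pi_{j^\ast} \in \Pp(\F)\) of \(\psi_{j^\ast}\), and hence cannot be identically zero. The key input is that a M\"obius transform \(\psi(x) = (ax+b)/(cx+e)\) with \(ae - bc \ne 0\) is a bijection of \(\Pp(\F)\). In particular, near its pole it behaves like a simple pole: it has a local expansion of order exactly \(-1\) there. Every other \(\psi_j\) with \(j \ne j^\ast\) takes a finite value at \(\pi_{j^\ast}\), because the poles are pairwise distinct, so \(\psi_j\) is regular there. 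We work in \(\F(x)\) (or over \(\overline{\F}\) if convenient), treating \(\pi_{j^\ast}\) as a place of the rational function field.

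Step 1 is the case where \(\pi_{j^\ast}\) is finite. Then \(\psi_{j^\ast}(x) = u(x)/(x - \pi_{j^\ast})\), where \(u\) is a rational function regular and nonzero at \(\pi_{j^\ast}\). The nonvanishing of \(u\) there follows from the nonzero determinant: the numerator \(ax+b\) cannot vanish where \(cx+e\) does. Hence \((\psi_{j^\ast})^\ell\) has valuation exactly \(-\ell\) at \(\pi_{j^\ast}\). The coefficient \(c_{j^\ast}\delta_{j^\ast}\) is nonzero, since \(c_{j^\ast} \in \F_p^\ast \subset \F^\ast\) and \(\delta_{j^\ast} \in \F^\ast\). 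So the term \(c_{j^\ast}\delta_{j^\ast}\psi_{j^\ast}^\ell\) has valuation \(-\ell < 0\). Each other summand \(c_j\delta_j\psi_j^\ell\) has valuation \(\ge 0\) at \(\pi_{j^\ast}\). By the ultrametric property of valuations, \(v_{\pi_{j^\ast}}(G_\ell) = -\ell\), so \(G_\ell \not\equiv 0\).

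Step 2 is the case \(\pi_{j^\ast} = \infty\). Here \(\psi_{j^\ast}\) is affine, \(\psi_{j^\ast}(x) = ax+b\) with \(a \ne 0\), so \(\psi_{j^\ast}^\ell\) has degree \(\ell\) and valuation \(-\ell\) at infinity. Every other \(\psi_j\) has a finite pole and is therefore regular at \(\infty\), with value \(a_j/c_j\). The same valuation comparison applies. Equivalently, one can substitute \(x = 1/y\) and reduce to Step 1.

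The only delicate point, and the one I expect to be the main obstacle, is confirming that the order of the pole of \(\psi_{j^\ast}\) is exactly \(1\) and that no cancellation among different \(j\) can occur. Exactness is handled by the determinant condition, which prevents the numerator from sharing the root of the denominator. The absence of cancellation is handled by distinctness of poles, which makes \(\pi_{j^\ast}\) a pole of exactly one summand. This is precisely the partial fraction statement: in the partial fraction expansion of \(G_\ell\), the principal part at \(\pi_{j^\ast}\) is \(c_{j^\ast}\delta_{j^\ast}\) times the principal part of \(\psi_{j^\ast}^\ell\). That principal part has nonzero leading coefficient \(u(\pi_{j^\ast})^\ell\) on \((x-\pi_{j^\ast})^{-\ell}\).
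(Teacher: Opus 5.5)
Your proposal is correct and is essentially the paper's argument. Both isolate, at a pole of some \(\psi_j\), the leading Laurent term \(c_j\delta_j r_j^\ell/(x-p_j)^\ell\) with \(r_j = -(A_jD_j - B_jC_j)/C_j^2 \neq 0\), and both treat a pole at \(\infty\) through the affine transform. The difference is only organizational: you fix one index with \(c_{j^\ast}\neq 0\) and show \(v_{\pi_{j^\ast}}(G_\ell) = -\ell\) directly, whereas the paper assumes \(G_\ell\equiv 0\), forces \(c_j = 0\) at every finite pole, and then notes that the remaining affine term \(c_0\delta_0\psi_0^\ell\) is a nonzero polynomial.
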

\begin{proof}
Write \(\psi_j(x) = (A_j x + B_j)/(C_j x + D_j)\) with \(A_j D_j - B_j C_j \ne 0\).  The poles \(p_j = -D_j/C_j\) (for \(C_j \ne 0\)) or \(p_j = \infty\) (for \(C_j = 0\)) are pairwise distinct by hypothesis, so at most one equals \(\infty\).

\vst
\noindent\emph{Case~A}\quad All poles are finite (\(C_j \ne 0\) for all \(j\)).  Near the pole \(p_j\), \(\psi_j(x)\) has a simple pole with residue \(r_j = -(A_j D_j - B_j C_j)/C_j^2 \ne 0\).  Consequently, \((\psi_j(x))^\ell\) has a pole of order \(\ell\) at \(p_j\) with leading Laurent coefficient \(r_j^\ell \ne 0\).  Since the poles are pairwise distinct, the leading term of \(G_\ell\) at \(p_j\) comes only from the \(j\)-th summand: \(G_\ell(x) \sim c_j \delta_j r_j^\ell / (x - p_j)^\ell\) as \(x \to p_j\).  If \(G_\ell \equiv 0\), then \(c_j \delta_j r_j^\ell = 0\) for all \(j\).  Since \(\delta_j \ne 0\) and \(r_j \ne 0\), this forces \(c_j = 0\) for all \(j\), contradicting \(\vec{c} \ne \vec{0}\).

\vst
\noindent\emph{Case~B}\quad One pole is at \(\infty\) (say \(C_0 = 0\), so that \(\psi_0\) is affine; all other \(C_j \ne 0\)).  The residue argument at each finite pole \(p_j\) (\(j \ge 1\)) gives \(c_j \delta_j r_j^\ell = 0\), hence \(c_j = 0\) for all \(j \ge 1\).  Then \(G_\ell(x) = c_0 \delta_0 (\psi_0(x))^\ell\).  Since \(\psi_0\) is nonconstant (\(A_0/D_0 \ne 0\) because \(\det(M_0) \ne 0\) and \(C_0 = 0\)), \(\delta_0 \ne 0\), and \(c_0 \ne 0\) (as \(\vec{c} \ne \vec{0}\)), this is not identically zero.

At most one pole equals \(\infty\), so Cases~A and~B are exhaustive.
\end{proof}

\begin{corollary}[Zero bound]\label{cor:zero-bound}
Under the hypotheses of \Cref{lem:nondegeneracy}, \(G_\ell\) has at most \(n\ell\) zeros in \(\Pp(\F)\).
\end{corollary}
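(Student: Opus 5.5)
The plan is to clear denominators and turn \(G_\ell\) into a polynomial whose degree we can bound, then count its roots. Write each \(\psi_j(x) = (A_j x + B_j)/(C_j x + D_j)\). Let \(J_{\mathrm{fin}} = \{j : C_j \ne 0\}\); by pole-distinctness at most one index lies outside \(J_{\mathrm{fin}}\). Set
\[
Q(x) := \prod_{j \in J_{\mathrm{fin}}} (C_j x + D_j)^\ell, \qquad N(x) := Q(x)\, G_\ell(x) = \sum_{j=0}^{n-1} c_j \delta_j (A_j x + B_j)^\ell \prod_{j' \in J_{\mathrm{fin}},\, j' \ne j} (C_{j'} x + D_{j'})^\ell .
\]
For \(j \in J_{\mathrm{fin}}\) the product omits one factor, and for the possible affine index \(j \notin J_{\mathrm{fin}}\) (where \(D_j \ne 0\), so \((A_jx+B_j)^\ell/D_j^\ell\) is a polynomial of degree \(\ell\)) it omits none. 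In either case \(N\) is a polynomial of degree at most \(\ell \cdot |J_{\mathrm{fin}}| + \ell \le n\ell\).

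First, by \Cref{lem:nondegeneracy}, \(G_\ell \not\equiv 0\). Since \(Q \not\equiv 0\), this gives \(N \not\equiv 0\). Hence \(N\) has at most \(n\ell\) roots in \(\F\).

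Second, we relate zeros of \(G_\ell\) in \(\F\) to roots of \(N\). At any point \(x \in \F\) that is not a pole of some \(\psi_j\), we have \(Q(x) \ne 0\), so \(G_\ell(x) = 0\) implies \(N(x) = 0\). At a finite pole \(p_j\), the Laurent analysis in the proof of \Cref{lem:nondegeneracy} shows \(G_\ell\) has a pole when \(c_j \ne 0\), so it is not a zero. When \(c_j = 0\), the \(j\)-th summand is absent, so we can drop \(j\) from \(Q\) from the start. The cleanest fix is to restrict every product to indices with \(c_j \ne 0\); the degree bound \(\le n\ell\) still holds, and \(Q\) then vanishes only at genuine poles of \(G_\ell\).

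Third, the point \(\infty\) needs separate treatment. After the restriction, \(N\) has degree \(\le \ell m\), where \(m = |\{j : c_j \ne 0\}|\). A zero of \(G_\ell\) at \(\infty\) can be charged against the degree deficit: the order of vanishing at \(\infty\) equals \(\deg Q - \deg N\) when this is positive. Alternatively, one can simply note that the number of zeros of a nonzero rational function on \(\Pp\), counted with multiplicity, equals its number of poles. \(G_\ell\) has poles only at the \(p_j\) with \(c_j \ne 0\), each of order \(\le \ell\), hence at most \(n\ell\) poles total. The main obstacle is exactly this bookkeeping of \(\infty\) and of cancellation between summands. The pole-counting argument avoids it, since distinct poles ensure the total pole order is at most \(\sum_j \ell = n\ell\), which gives the bound directly.
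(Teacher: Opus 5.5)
Your argument is correct and is essentially the paper's: clear denominators by the product over the finite poles, bound \(\deg N \le n\ell\) to count zeros in \(\F\), and handle \(\infty\) via the fact that a nonzero rational function has as many zeros as poles on \(\Pp(\ol{\F})\), which is the same fact the paper invokes as \(\deg R = \max(\deg N, \deg D) \le n\ell\). One slip to fix: the chain \(\deg N \le \ell|J_{\mathrm{fin}}| + \ell \le n\ell\) is false when all poles are finite (\(|J_{\mathrm{fin}}| = n\)); split cases as the paper does, getting \(\deg N \le \ell|J_{\mathrm{fin}}| = n\ell\) with no affine index and \(\deg N \le \ell(|J_{\mathrm{fin}}|+1) = n\ell\) with one.
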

\begin{proof}
Write \(\psi_j(x) = (A_j x + B_j)/(C_j x + D_j)\).  Let \(S = \{j : C_j \ne 0\}\) and \(T = \{j : C_j = 0\}\), so \(|T| \le 1\) (at most one pole can be \(\infty\)).  Define the common denominator
\[
D(x) = \prod_{j \in S} (C_j x + D_j)^\ell,
\]
and set \(N(x) = G_\ell(x) \cdot D(x)\).  We compute the degree of \(N(x)\).

For \(j \in S\):  the \(j\)-th summand of \(G_\ell\) is \(c_j \delta_j (A_j x + B_j)^\ell / (C_j x + D_j)^\ell\).  After multiplying by \(D(x)\), the denominator \((C_j x + D_j)^\ell\) cancels, contributing to \(N(x)\) the term \(c_j \delta_j (A_j x + B_j)^\ell \prod_{j' \in S,\, j' \ne j} (C_{j'} x + D_{j'})^\ell\), of degree at most \(\ell + \ell(|S| - 1) = \ell |S|\).

For \(j \in T\):  we have \(C_j = 0\), so \(\psi_j(x) = (A_j/D_j) x + B_j/D_j\) is affine (with \(A_j/D_j \ne 0\)).  The \(j\)-th summand of \(G_\ell\) is \(c_j \delta_j ((A_j/D_j) x + B_j/D_j)^\ell\), a polynomial of degree \(\ell\).  After multiplying by \(D(x) = \prod_{j' \in S} (C_{j'} x + D_{j'})^\ell\), which has degree \(\ell |S|\), the contribution is a polynomial of degree \(\ell + \ell |S|\).

Since \(|S| + |T| = n\):  if \(T = \emptyset\), then \(|S| = n\) and \(\deg(N) \le \ell n\).  If \(|T| = 1\), then \(|S| = n - 1\) and \(\deg(N) \le \max(\ell(n-1),\, \ell + \ell(n-1)) = \ell n\).  In both cases \(\deg(N) \le n\ell\), and \(\deg(D) = \ell|S| \le n\ell\).

Since \(G_\ell \not\equiv 0\) by \Cref{lem:nondegeneracy}, we have \(N \not\equiv 0\).  Every zero of \(G_\ell\) lying in \(\F\) is a root of \(N\) (evaluate \(N = G_\ell D\) there), so \(G_\ell\) has at most \(\deg(N) \le n\ell\) zeros in \(\F\) --- which is all we use below.  For the count in \(\Pp(\F)\), recall that a nonzero rational function \(R = N/D \in \F(x)\) has exactly \(\deg R = \max(\deg N, \deg D)\) zeros in \(\Pp(\ol{\F})\), counted with multiplicity;  here \(\deg N \le n\ell\) and \(\deg D = \ell|S| \le n\ell\), so \(G_\ell\) has at most \(n\ell\) zeros in \(\Pp(\F)\) as well.
\end{proof}

\subsection{Full rank of the test matrix}\label{subsec:full-rank}

\begin{theorem}\label{thm:full-rank-single}
Let \(p\) be any prime, \(d \ge 2\), and \(2 \le k \le n \le d(k-1)\) with \(n \le p^d - 1\).  For any \(\vec{i} \in \{0, \ldots, d-1\}^n\) and \(\vec{c} \in \F_p^n \setminus \{\vec{0}\}\), we have
\[
|\{x_0 \in \F^* \setminus \Poles(n) : \sum_{j=0}^{n-1} c_j \eta^{(i_j)} (\Phi^j(x_0))^\ell = 0 \;\;\forall\; \ell = 1, \ldots, k-1\}| \le n.
\]
\end{theorem}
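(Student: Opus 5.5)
The plan is to discard all constraints except the one at \(\ell = 1\). This only enlarges the set, so it suffices to bound the number of \(x_0 \in \F^* \setminus \Poles(n)\) with \(G_1(x_0; \vec{i}, \vec{c}) = 0\), where
\[
G_1(x; \vec{i}, \vec{c}) = \sum_{j=0}^{n-1} c_j \eta^{(i_j)} \Phi^j(x).
\]
We then apply \Cref{lem:nondegeneracy} and \Cref{cor:zero-bound} with \(\psi_j = \Phi^j\), \(\delta_j = \eta^{(i_j)}\) and \(\ell = 1\).

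\textbf{Step 1: check the hypotheses.} The poles of \(\Phi^0, \ldots, \Phi^{n-1}\) are pairwise distinct by \Cref{cor:distinct-poles}, using \(n \le p^d - 1\). Each \(\delta_j = \eta^{(i_j)}\) is nonzero, because the shifting factors form an \(\F_p\)-basis of \(\F\) (\Cref{lem:eta-independence}) and so none of them is zero. Also \(\vec{c} \ne \vec{0}\) by assumption. \Cref{lem:nondegeneracy} then gives \(G_1 \not\equiv 0\).

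\textbf{Step 2: count zeros.} \Cref{cor:zero-bound} with \(\ell = 1\) shows that \(G_1\) has at most \(n\) zeros in \(\F\). More concretely, put \(D(x) = \prod_{j=1}^{n-1}(C_j x + 1)\). Then \(N = G_1 D\) is a nonzero polynomial of degree at most \(n\).

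\textbf{Step 3: justify the evaluation.} For \(x_0 \in \F^* \setminus \Poles(n)\), no \(\Phi^j\) has a pole at \(x_0\), since \(\Poles(n)\) consists exactly of the finite poles \(-1/C_j\). So \(G_1(x_0)\) is an honest field element, and \(D(x_0) \ne 0\). Hence \(G_1(x_0) = 0\) forces \(N(x_0) = 0\), which happens for at most \(n\) values of \(x_0\). This gives the bound \(n\).

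The main point to get right is this evaluation step: the vanishing of the rational function at a non-pole point must correspond to a root of the cleared-denominator polynomial. Excluding \(\Poles(n)\) is precisely what ensures this. The degree bound itself is already supplied by \Cref{cor:zero-bound}.
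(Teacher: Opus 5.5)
Your proposal is correct and follows the paper's proof. You restrict to \(\ell = 1\), apply \Cref{lem:nondegeneracy} with \(\psi_j = \Phi^j\) and \(\delta_j = \eta^{(i_j)}\) (pole distinctness from \Cref{cor:distinct-poles}, nonzero shifting factors), and bound the zeros via \Cref{cor:zero-bound}. Your Step~3, which notes that excluding \(\Poles(n)\) gives \(D(x_0) \ne 0\) so every such zero of \(G_1\) is a root of \(N\), just spells out the containment the paper uses implicitly.
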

\begin{proof}
Apply \Cref{lem:nondegeneracy} with \(\ell = 1\), \(\psi_j = \Phi^j\), and \(\delta_j = \eta^{(i_j)}\).  By \Cref{cor:distinct-poles}, the \(\Phi^j\) have pairwise distinct poles.  The shifting factors \(\eta^{(i_j)}\) are nonzero (being ratios of dual basis elements).  So \(G_1(x) = \sum_j c_j \eta^{(i_j)} \Phi^j(x) \not\equiv 0\), and by \Cref{cor:zero-bound}, \(G_1\) has at most \(n\) zeros in \(\Pp(\F)\).  The bad set is contained in \(\{x_0 \in \F^* : G_1(x_0) = 0\}\), which has size \(\le n\).
\end{proof}

\begin{remark}\label{rem:ell-equals-one}
The bound \(\le n\) comes from the \(\ell = 1\) case alone.  The bad set requires \(G_\ell = 0\) for \emph{all} \(\ell\), and the containment in the zero set of \(G_1\) already suffices.
\end{remark}

\begin{theorem}[Perfect security against single-block leakage]\label{thm:single-block}
Let \(p\) be any prime, \(d \ge 2\), and \(2 \le k \le n \le d(k-1)\) with \(n \le p^d - 1\).  Define
\[
\Bad := \{\alpha - 1\} \cup \Poles(n) \;\cup\; \bigcup_{\vec{i} \in \{0,\ldots,d-1\}^n} \; \bigcup_{\vec{c} \in \F_p^n \setminus \{\vec{0}\}} \{x_0 \in \F^* : G_1(x_0; \vec{i}, \vec{c}) = 0\}.
\]
Then \(|\Bad| \le n + n(dp)^n\).  For any \(x_0 \in \F^* \setminus \Bad\):
\begin{enumerate}[(a)]
	\item  \(x_0, \ldots, x_{n-1}\) are pairwise distinct elements of \(\F^*\).
	\item  \(\ShamirSS(n, k, \vec{X})_\F\) is perfectly secure against all single-block leakage.
\end{enumerate}
\end{theorem}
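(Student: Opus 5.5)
The plan has three parts: bound \(|\Bad|\) by a union bound, get distinctness from the structural exclusions, and get security by combining the rank criterion with the zero-set bound. No new analytic input is needed beyond \Cref{thm:full-rank-single} and its zero bound.

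\emph{Size bound.} \(\Bad\) has two pieces. The first is the structural set \(\{\alpha-1\}\cup\Poles(n)\). This has size exactly \(n\) by \Cref{def:pole-set} and \Cref{rem:structural-exclusions}. The second is a union indexed by pairs \((\vec{i},\vec{c})\), with \(\vec{i}\in\{0,\ldots,d-1\}^n\) and \(\vec{c}\in\F_p^n\setminus\{\vec{0}\}\). There are at most \(d^n(p^n-1)\le (dp)^n\) such pairs. For each pair, \Cref{lem:nondegeneracy} (applied with \(\ell=1\), \(\psi_j=\Phi^j\), \(\delta_j=\eta^{(i_j)}\neq 0\)) together with \Cref{cor:distinct-poles} shows that \(G_1(\cdot;\vec{i},\vec{c})\not\equiv 0\). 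Then \Cref{cor:zero-bound} bounds its number of zeros in \(\F\) by \(n\). One subtlety: at points of \(\Poles(n)\) the function \(G_1\) is not naively evaluable. The intended meaning is zeros of \(G_1\) as a rational function, and those are roots of the numerator \(N\). In any case these points are already counted in the structural part, so the bound is unaffected. Summing gives \(|\Bad|\le n+n(dp)^n\). (The factor \(p^n-1\) can be trimmed to \((p^n-1)/(p-1)\) by projectivizing \(\vec{c}\), but we do not need this.)

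\emph{Part (a).} This is immediate. If \(x_0\notin\Bad\), then \(x_0\notin\{\alpha-1\}\cup\Poles(n)\), and \Cref{pro:distinct-eval} applies, using \(n\le p^d-1\). Hence the \(x_j\) are pairwise distinct elements of \(\F^*\), and the scheme is well defined.

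\emph{Part (b).} Fix an arbitrary pattern \(\vec{i}\). By \Cref{pro:dichotomy} it suffices to show that \(L_{\vec{i}}\) is surjective. By \Cref{pro:dual} and \Cref{cor:surj-rank}, this is equivalent to the following: for every \(\vec{c}\neq\vec{0}\), some \(\ell\in\{1,\ldots,k-1\}\) has \(\sum_j c_j\eta^{(i_j)}x_j^\ell\neq0\). Since \(k\ge2\), we may take \(\ell=1\). Because \(x_0\notin\Poles(n)\), each \(x_j=\Phi^j(x_0)\) is finite, so \(\sum_j c_j\eta^{(i_j)}x_j = G_1(x_0;\vec{i},\vec{c})\). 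This value is nonzero precisely because \(x_0\) lies outside the \((\vec{i},\vec{c})\) zero set contained in \(\Bad\). So \(\Theta_{\vec{i}}\) has full column rank, \(L_{\vec{i}}\) is surjective, and \(\SD=0\) for every pair of secrets.

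\emph{Main obstacle.} There is little genuine difficulty once the earlier results are assumed. The step needing care is the bookkeeping between evaluating \(G_1\) at \(x_0\) and the rational-function zero count. I would handle this by working with the numerator \(N=G_1\cdot D\) from \Cref{cor:zero-bound}. Off \(\Poles(n)\), the denominator \(D\) is nonzero, so \(G_1(x_0)=0\) if and only if \(N(x_0)=0\), and \(N\) is a nonzero polynomial of degree at most \(n\). I would also verify explicitly that the hypothesis \(n\le d(k-1)\) is used only to make the full-rank condition possible, and plays no role in the counting.
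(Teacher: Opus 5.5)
Your proposal is correct and takes essentially the same route as the paper. You use the \(n\) structural exclusions plus a union bound over the \(d^n(p^n-1)\) pairs \((\vec{i},\vec{c})\) with the \(\ell=1\) zero bound (which the paper packages as \Cref{thm:full-rank-single}), get part (a) from \Cref{pro:distinct-eval}, and get part (b) via \Cref{pro:dual}, \Cref{cor:surj-rank} and \Cref{pro:dichotomy}; your extra care about evaluating \(G_1\) off \(\Poles(n)\) through the numerator \(N\) is a harmless refinement of the same argument.
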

\begin{proof}
Part~(a) follows from \Cref{pro:distinct-eval}, since \(x_0 \notin \{\alpha - 1\} \cup \Poles(n)\).

For part~(b):  by the security reduction (\Cref{cor:surj-rank}), we need \(\rank_{\F_p}(\Theta_{\vec{i}}) = n\) for every block-index tuple \(\vec{i} \in \{0, \ldots, d-1\}^n\).  By \Cref{pro:dual}, this fails for \(\vec{i}\) if and only if there exists \(\vec{c} \in \F_p^n \setminus \{\vec{0}\}\) with \(\sum_j c_j \eta^{(i_j)} x_j^\ell = 0\) for all \(\ell = 1, \ldots, k-1\).  For each such pair \((\vec{i}, \vec{c})\), \Cref{thm:full-rank-single} bounds the set of bad base points by \(\le n\).

We now count.  The structural exclusion set \(\{\alpha - 1\} \cup \Poles(n)\) has size \(n\) (by \Cref{rem:structural-exclusions}).  The number of block-index tuples is \(|\{0, \ldots, d-1\}^n| = d^n\).  The number of nonzero coefficient vectors is \(|\F_p^n \setminus \{\vec{0}\}| = p^n - 1\).  By \Cref{thm:full-rank-single}, each pair contributes at most \(n\) bad base points.  By the union bound:
\[
|\Bad| \le n + d^n \cdot (p^n - 1) \cdot n \le n + n \cdot d^n \cdot p^n = n + n(dp)^n.
\]

For any \(x_0 \in \F^* \setminus \Bad\):  by construction, \(x_0\) avoids the structural exclusion set (ensuring part~(a)) and, for every \(\vec{i}\) and every nonzero \(\vec{c}\), there exists \(\ell\) with \(G_\ell(x_0; \vec{i}, \vec{c}) \ne 0\).  This means \(\Theta_{\vec{i}}\) has trivial kernel for every \(\vec{i}\), i.e., full column rank \(n\).  By \Cref{cor:surj-rank}, \(L_{\vec{i}}\) is surjective for all \(\vec{i}\), and by \Cref{pro:dichotomy}, \(\SD = 0\) for all pairs of secrets and all block-leakage patterns.
\end{proof}

\begin{corollary}[Existence]\label{cor:existence-single}
A good \(x_0\) exists whenever \(p^d - 1 > n + n(dp)^n\); a sufficient condition is \(d > n\big(1 + \log_p d\big) + \log_p(2n)\), which gives the range \(n = O\big(d/\log_p(pd)\big) = O\big(d/(1 + \log_p d)\big)\).
\end{corollary}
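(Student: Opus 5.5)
The plan has two parts: reduce existence to a counting statement via \Cref{thm:single-block}, then convert that counting statement into the logarithmic sufficient condition.

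\textbf{Step 1: counting reduction.} By \Cref{thm:single-block}, every \(x_0 \in \F^* \setminus \Bad\) is good. We have \(|\F^*| = p^d - 1\) and \(|\Bad| \le n + n(dp)^n\). So whenever \(p^d - 1 > n + n(dp)^n\), the complement \(\F^* \setminus \Bad\) is nonempty, and any element of it is a good base point. The hypotheses of \Cref{thm:single-block} (\(2 \le k \le n \le d(k-1)\) and \(n \le p^d - 1\)) are inherited from the ambient setting. The inequality itself already forces \(n < p^d - 1\), so no extra check is needed there.

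\textbf{Step 2: crude bounds on the bad set.} Since \(dp \ge 4 > 1\), we have \((dp)^n \ge 1\), so \(n + n(dp)^n \le 2n(dp)^n\). It therefore suffices to prove \(p^d - 1 > 2n(dp)^n\), and for this it is enough to have \(p^d \ge 2n(dp)^n + 1\). I would absorb the \(+1\) through the strict inequality in the stated condition, as follows.

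\textbf{Step 3: taking logarithms.} Take \(\log_p\) of \(2n(dp)^n\):
\[
\log_p\big(2n(dp)^n\big) = n(1 + \log_p d) + \log_p(2n).
\]
The hypothesis \(d > n(1+\log_p d) + \log_p(2n)\) thus says exactly \(p^d > 2n(dp)^n\). Both sides are integers (\(p^d\) and \(2n(dp)^n\) are integers since \(d, p, n\) are), so this strict inequality gives \(p^d \ge 2n(dp)^n + 1\). Hence \(p^d - 1 \ge 2n(dp)^n \ge n + n(dp)^n\).

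The only delicate point is getting strictness. One option is to note that the inequality \(n \le n(dp)^n\) in Step 2 is strict, since \(dp > 1\) and \(n \ge 1\), so that \(n + n(dp)^n < 2n(dp)^n \le p^d - 1\). That closes the gap.

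\textbf{Step 4: asymptotic range.} Rewrite the sufficient condition as \(n(1+\log_p d) + \log_p(2n) < d\). Since \(\log_p(2n) = O(\log_p d)\) when \(n \le d\), this holds when \(n \le c\, d/(1 + \log_p d)\) for a suitable constant \(c\) and \(d\) large enough. Noting that \(1 + \log_p d = \log_p(pd)\), this gives \(n = O(d/\log_p(pd))\).

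I expect no real obstacle. The only care needed is the strict-versus-nonstrict bookkeeping in Step 3 and the implicit assumption \(n \le d\) used to absorb \(\log_p(2n)\) in Step 4.
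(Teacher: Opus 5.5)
Your proposal is correct and is the argument the paper intends. The paper states this corollary without a separate proof, as an immediate consequence of \Cref{thm:single-block}, and does the same arithmetic in the proof of \Cref{thm:multi-block-fixed-pattern}: compare \(|\F^*| = p^d - 1\) with \(|\Bad| \le n + n(dp)^n\), use \(n + n(dp)^n < 2n(dp)^n\), and take \(\log_p\). Your caveat about \(n \le d\) in Step 4 is unnecessary, since the sufficient condition itself forces \(n(1+\log_p d) < d\).
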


\subsection{Physical bit security}\label{subsec:physical-bit}

\begin{proposition}[Block security implies sub-block security]\label{pro:sub-block}
Let \(S_0, \ldots, S_{n-1}\) be arbitrary nonempty finite sets, assume the hypotheses of \Cref{thm:single-block}, and let \(x_0 \in \F^* \setminus \Bad\).  Then \(\ShamirSS(n, k, \vec{X})_\F\) is perfectly secure against any leakage that applies an arbitrary function \(g_j : \F_p \to S_j\) to a single \(\F_p\)-block of each share.
\end{proposition}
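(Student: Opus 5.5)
The plan is a data-processing argument: the sub-block leakage will be written as a deterministic post-processing of the full single-block leakage, and the perfect security of \Cref{thm:single-block} then transfers directly.

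Fix the leakage: for each share \(j\) the adversary picks a block index \(i_j \in \{0,\ldots,d-1\}\) and a function \(g_j : \F_p \to S_j\), and observes \(g_j((s_j)_{i_j})\). Set \(\vec{i} = (i_0,\ldots,i_{n-1})\) and let \(g : \F_p^n \to S_0 \times \cdots \times S_{n-1}\) be the coordinatewise map \(g(\vec{y}) = (g_0(y_0),\ldots,g_{n-1}(y_{n-1}))\). The observed leakage is then exactly \(g(\vec{\ell})\), where \(\vec{\ell} = ((s_j)_{i_j})_j\) is the single-block leakage vector of \Cref{pro:leakage-map}. Any adaptivity of the \(g_j\) is excluded here, since the leakage is local and non-adaptive as in \Cref{def:block-leakage}.

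Since \(x_0 \notin \Bad\), the proof of \Cref{thm:single-block} shows that \(\Theta_{\vec{i}}\) has full column rank for every \(\vec{i}\). By \Cref{cor:surj-rank}, \(L_{\vec{i}}\) is therefore surjective, and by \Cref{pro:dichotomy} we get \(\SD(\vec{\ell}|_s, \vec{\ell}|_{s'}) = 0\) for all secrets \(s, s'\). In fact both distributions are uniform on \(\F_p^n\), because each is a coset of \(\Img(L_{\vec{i}}) = \F_p^n\).

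To conclude, I will use that statistical distance does not increase under a deterministic map: \(\SD(g(X), g(Y)) \le \SD(X,Y)\). Here it is even simpler: \(\vec{\ell}|_s\) and \(\vec{\ell}|_{s'}\) are identically distributed, so \(g(\vec{\ell}|_s)\) and \(g(\vec{\ell}|_{s'})\) are too, and the distance is \(0\). Taking the maximum over \(s \in \F^*\), over \(\vec{i}\), and over the choices of \(g_j\) gives \(\eps_{\mc{F}}(\vec{X}) = 0\). I do not expect a real obstacle. The only point needing care is that the block pattern \(\vec{i}\) is arbitrary, and the bad set of \Cref{thm:single-block} is already a union over all \(\vec{i} \in \{0,\ldots,d-1\}^n\), so \(x_0 \notin \Bad\) covers every pattern at once.
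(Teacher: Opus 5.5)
Your proposal is correct and follows essentially the same route as the paper. Surjectivity of \(L_{\vec{i}}\) for every pattern \(\vec{i}\) (guaranteed by \(x_0 \notin \Bad\) via \Cref{thm:single-block}) makes the block-leakage vector uniform on \(\F_p^n\) for every secret. The coordinatewise deterministic post-processing by the \(g_j\) then keeps the statistical distance at zero, by the data processing inequality.
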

\begin{proof}
By \Cref{thm:single-block}, for every block-index tuple \((i_0, \ldots, i_{n-1}) \in \{0, \ldots, d-1\}^n\), the map \(L_{\vec{i}}\) is surjective.  By \Cref{pro:dichotomy}, this means the leakage vector \(((P(X_0))_{i_0}, \ldots, (P(X_{n-1}))_{i_{n-1}}) \in \F_p^n\) has the same distribution for every secret \(s \in \F\):  it is uniform over \(\F_p^n\).

Now suppose the adversary applies deterministic functions \(g_j : \F_p \to S_j\) to obtain\\\((g_0((P(X_0))_{i_0}), \ldots, g_{n-1}((P(X_{n-1}))_{i_{n-1}}))\).  Since each \(g_j\) is a fixed function, this post-processed leakage is a deterministic function of \(((P(x_j))_{i_j})_{j=0}^{n-1}\).  By the data processing inequality for statistical distance, for any two secrets \(s, s'\):
\[
\SD\big((g_j((P(x_j))_{i_j}))_j \,\big|\, s,\;\; (g_j((P(x_j))_{i_j}))_j \,\big|\, s'\big) \le \SD\big(((P(x_j))_{i_j})_j \,\big|\, s,\;\; ((P(x_j))_{i_j})_j \,\big|\, s'\big) = 0.
\]
Hence perfect security is preserved under arbitrary post-processing.
\end{proof}

\begin{corollary}[Physical bit security, all characteristics]\label{cor:physical-bit}
Assume the hypotheses of \Cref{thm:single-block} and let \(x_0 \in \F^* \setminus \Bad\).  For any prime \(p\), if the adversary leaks one physical bit per share (from an arbitrary \(\F_p\)-block, possibly a different block for each share), then \(\ShamirSS(n, k, \vec{X})_\F\) is perfectly secure.
\end{corollary}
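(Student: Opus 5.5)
The plan is to reduce the physical-bit adversary to the sub-block leakage setting of \Cref{pro:sub-block}. An element of \(\F = \F_{p^d}\) is stored as its \(d\) coordinates \((x)_0, \ldots, (x)_{d-1} \in \F_p\), each written as a \(\lceil \log_2 p\rceil\)-bit binary string. So a single physical bit of the share \(s_j\) is specified by a block index \(i_j \in \{0, \ldots, d-1\}\) and a bit position \(b_j \in \{0, \ldots, \lceil\log_2 p\rceil - 1\}\). The observed bit is then \(g_j((s_j)_{i_j})\), where \(g_j : \F_p \to \{0,1\}\) extracts the \(b_j\)-th bit of the binary representation of an element of \(\F_p\). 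The first step is to make this identification explicit for each share \(j\), with the adversary's non-adaptive choice of \((i_j, b_j)\) fixed in advance.

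Next, set \(S_j = \{0,1\}\) for every \(j\). The leakage is then exactly a function \(g_j : \F_p \to S_j\) applied to a single \(\F_p\)-block of each share, with possibly different blocks and different functions for different shares. Since \(x_0 \in \F^* \setminus \Bad\), \Cref{pro:sub-block} applies directly. It gives statistical distance zero between the leakage distributions for any two secrets, so \(\eps_{\mc{F}}(\vec{X}) = 0\) for the family \(\mc{F}\) of all such single-physical-bit leakages. Nothing in this argument depends on \(p\): for \(p = 2\) the function \(g_j\) is the identity on a single bit, and for odd \(p\) it is a genuinely lossy map. In both cases it is just post-processing, so the corollary holds in all characteristics.

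The only point needing care, which is minor, is that the adversary's choice of block and bit must be fixed before sharing, so that \(g_j\) and \(i_j\) are deterministic and do not depend on the shares. If the adversary is allowed to randomize its choice of \((i_j, b_j)\) independently of the shares, I would condition on that choice, apply the argument above to each fixed pattern, and average; the resulting mixture still has statistical distance zero. Beyond this, \Cref{thm:single-block} (via \Cref{pro:sub-block}) already covers every block-index tuple \(\vec{i}\), so the argument needs no further input.
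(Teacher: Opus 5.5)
Your proposal is correct and follows the paper's own proof: you identify each physical bit as a function \(g_j : \F_p \to \{0,1\}\) of a single \(\F_p\)-block, set \(S_j = \{0,1\}\), and invoke \Cref{pro:sub-block}, which rests on \Cref{thm:single-block}. Your additional remark about a randomized (share-independent) choice of block and bit goes slightly beyond the paper but is sound: conditioning on the choice and averaging preserves statistical distance zero.
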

\begin{proof}
Each share is stored as \(d\) blocks over \(\F_p\), each block as a \(\lceil \log_2 p \rceil\)-bit string, so a physical bit \(\mathrm{bit}_{r_j}((P(x_j))_{i_j})\) is a function \(g_j : \F_p \to \{0,1\}\) of the \emph{single} block \((P(x_j))_{i_j}\).  By \Cref{thm:single-block}, the scheme is perfectly secure against single-block leakage, so the claim follows from \Cref{pro:sub-block}.
\end{proof}
%
%\begin{remark}\label{rem:cross-block}
%This resolves the derandomization of~\cite{MNPY24} for the single-bit-per-share regime, across all characteristics.  The cross-block physical bit case for \(p > 2\), where the bit-extraction map is nonlinear, is handled by the data processing inequality, since the block-level security is perfect.
%\end{remark}

%% ============================================================
%%  SECTION 5: MULTI-BLOCK SECURITY
%% ============================================================
\section{Multi-block security}\label{sec:multi-block}

We now extend the analysis to the multi-block setting, where the adversary probes multiple \(\F_p\)-coordinates from each share.

\subsection{Setup}\label{subsec:multi-block-setup}

For an admissible multi-block leakage pattern with \(M = \sum_{j=0}^{n-1} m_j \le d(k-1)\) total blocks, index the leaked coordinates by the \(M\) pairs \((j, r)\), where \(0 \le j \le n-1\) and \(1 \le r \le m_j\).  The development of \Cref{subsec:dichotomy,subsec:test-matrix} goes through with the single index \(j\) replaced by the pair \((j,r)\), the block index \(i_j\) replaced by \(i_j^{(r)}\), and \(n\) replaced by \(M\).  We record the outcome.

\begin{proposition}[Multi-block reduction]\label{pro:multi-block-reduction}
	With the pattern as above, write \(\vec{\ell} = \big((s_j)_{i_j^{(r)}}\big)_{j,r} \in \F_p^M\).  Then:
	\begin{enumerate}[(a)]
		\item  \(\ell_{j,r} = (s)_{i_j^{(r)}} + L(\vec{a})_{j,r}\), where \(L : \F^{k-1} \to \F_p^M\), given by \(L(\vec{a})_{j,r} = \sum_{t=1}^{k-1} \Tr(a_t \cdot x_j^t \cdot f_{i_j^{(r)}}^*)\), is \(\F_p\)-linear;
		\item  \(\SD(\vec{\ell}|_s,\, \vec{\ell}|_{s'}) \in \{0, 1\}\) for all \(s, s' \in \F\), and \(\SD = 0\) whenever \(L\) is surjective;
		\item  \(L\) is surjective if and only if the test matrix \(\Theta \in \F_p^{d(k-1) \times M}\), defined as the \(\F_p\)-coordinate representation of
		\[
		\vec{c} \mapsto \bigg(\sum_{j=0}^{n-1} \sum_{r=1}^{m_j} c_{j,r} \, \eta^{(i_j^{(r)})} x_j^\ell\bigg)_{\ell=1}^{k-1} \in \F^{k-1} \cong \F_p^{d(k-1)},
		\]
		has full column rank \(M\).
	\end{enumerate}
\end{proposition}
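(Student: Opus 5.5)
The plan is to rerun the single-block arguments (\Cref{pro:leakage-map}, \Cref{pro:dichotomy}, \Cref{pro:dual}, \Cref{cor:surj-rank}) with the index set \(\{0,\ldots,n-1\}\) replaced by the \(M\) pairs \((j,r)\). None of those proofs used anything specific to one block per share; they used only \(\F_p\)-linearity of the coordinate maps and the nondegeneracy of the two pairings.

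For part~(a), expand the share as \(s_j = P(x_j) = s + \sum_{t=1}^{k-1} a_t x_j^t\). Then apply the coordinate map \((\cdot)_{i_j^{(r)}} = \Tr(\cdot\, f^*_{i_j^{(r)}})\) from \Cref{fac:coord-trace}. Its \(\F_p\)-linearity splits off \((s)_{i_j^{(r)}}\) and leaves \(\sum_t \Tr(a_t x_j^t f^*_{i_j^{(r)}})\). This remainder is \(\F_p\)-linear in \(\vec a\), because \(\Tr\) is \(\F_p\)-linear and multiplication by a fixed field element is \(\F\)-linear.

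For part~(b), \(\vec a\) is uniform on \(\F^{k-1}\), so \(L(\vec a)\) is uniform on the subspace \(\Img(L) \le \F_p^M\). Hence \(\vec\ell|_s\) is uniform on the coset \(\vec v(s) + \Img(L)\), where \(\vec v(s) = ((s)_{i_j^{(r)}})_{j,r}\). Two cosets of the same subspace are either equal or disjoint. In the first case the two uniform distributions coincide and \(\SD=0\); in the second their supports are disjoint and \(\SD=1\). If \(L\) is surjective, every coset is all of \(\F_p^M\), so \(\SD = 0\) for every pair of secrets.

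For part~(c), equip \(\F_p^M\) with the standard pairing and \(\F^{k-1}\) with the trace pairing \(\sum_\ell \Tr(a_\ell b_\ell)\). As in \Cref{pro:dual}, \(L\) is surjective iff the adjoint \(L^*\) is injective. To compute \(L^*\), write
\[
\langle \vec c, L(\vec a)\rangle = \sum_t \Tr\Big(a_t \sum_{j,r} c_{j,r} x_j^t f^*_{i_j^{(r)}}\Big),
\]
moving \(c_{j,r}\in\F_p\) inside \(\Tr\). Substituting \(f^*_i = \eta^{(i)} f_0^*\) gives
\[
L^*(\vec c)_\ell = f_0^* \sum_{j,r} c_{j,r}\, \eta^{(i_j^{(r)})} x_j^\ell .
\]
Since \(f_0^*\neq 0\), \(\ker L^*\) equals the kernel of the map whose coordinate matrix is \(\Theta\). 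So \(L\) is surjective iff \(\Theta\) has trivial kernel, i.e.\ full column rank \(M\).

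I do not expect a real obstacle; this is bookkeeping. The one point that needs care is the adjoint computation, where the index \(j\) in \(x_j^t\) must be kept separate from the pair \((j,r)\) labelling the leaked coordinate. Admissibility of the pattern is not needed here; it only matters later, for nondegeneracy.
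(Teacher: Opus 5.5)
Your proposal is correct and follows the paper's route. The paper simply notes that the single-block proofs (\Cref{pro:leakage-map,pro:dichotomy,pro:dual} and \Cref{cor:surj-rank}) use only three facts: the \(\F_p\)-linearity of the coordinate maps, the \(\F\)-linearity of the shares in the randomness, and \(f_i^* = \eta^{(i)} f_0^*\) with \(f_0^* \ne 0\). Hence they transfer verbatim to the index set \(\{(j,r)\}\) with \(n\) replaced by \(M\); you carry out those same steps explicitly, and your remark that admissibility plays no role at this stage agrees with the paper.
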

\begin{proof}
	The proofs of \Cref{pro:leakage-map,pro:dichotomy,pro:dual} and \Cref{cor:surj-rank} use only  (i) the \(\F_p\)-linearity of the coordinate maps \((\cdot)_i\) (\Cref{fac:coord-trace}), (ii) the \(\F\)-linearity of \(\vec{a} \mapsto (P(x_j))_j\) in the randomness, and (iii) \(f_i^* = \eta^{(i)} f_0^*\) with \(f_0^* \ne 0\) (\Cref{def:dual-basis}).  None of these depends on the leaked coordinates being one per share, so every step applies verbatim after replacing the index set \(\{0, \ldots, n-1\}\) by \(\{(j,r)\}\) and \(n\) by \(M\).
\end{proof}

Full column rank \(\rank_{\F_p}(\Theta) = M\) is thus a \emph{sufficient} condition for perfect security, and is the criterion we establish below; as in the single-block case (\Cref{pro:dichotomy}) it need not be necessary.

The structure of the kernel is as follows.  A nonzero vector \((c_{j,r})_{j,r} \in \ker(\Theta)\) groups naturally by shares, producing \(d_j = \sum_{r=1}^{m_j} c_{j,r} \eta^{(i_j^{(r)})} \in \F\).  By admissibility, the shifting factors within each share are \(\F_p\)-independent, so \(d_j = 0\) if and only if all of the corresponding \(c_{j,r}\) vanish.  In particular, at least one \(d_j\) must be nonzero.

\begin{remark}[Admissibility gives within-share independence]\label{rem:admissibility}
	Probing the same block position of a share twice returns the same symbol, so every pattern is equivalent to an admissible one, with \(m_j \le d\).  What admissibility buys is exactly the fact used in \Cref{thm:multi-block-fixed-pattern}, which is that the shifting factors \(\eta^{(i_j^{(1)})}, \ldots, \eta^{(i_j^{(m_j)})}\) attached to a single share are distinct members of the \(\F_p\)-basis \(\{\eta^{(0)}, \ldots, \eta^{(d-1)}\}\) (\Cref{lem:eta-independence}), hence \(\F_p\)-linearly independent, so that \(d_j = \sum_{r} c_{j,r} \eta^{(i_j^{(r)})} = 0\) forces \(c_{j,1} = \cdots = c_{j,m_j} = 0\).  Without distinctness, \(\Theta\) has repeated columns and the full-rank certificate fails for trivial reasons.
\end{remark}

\subsection{Nondegeneracy with field coefficients}\label{subsec:field-coeff}

\begin{lemma}\label{lem:nondegeneracy-F}
Let \(\psi_0, \ldots, \psi_{n-1}\) be M\"obius transforms on \(\Pp(\F)\) with pairwise distinct poles.  Let \(d_0, \ldots, d_{n-1} \in \F\) with at least one \(d_j \ne 0\), and \(\ell \ge 1\).  Then \(\sum_j d_j (\psi_j(x))^\ell \not\equiv 0\) as a rational function, and has at most \(n\ell\) zeros in \(\Pp(\F)\).
\end{lemma}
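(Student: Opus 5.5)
The plan is to rerun the proofs of \Cref{lem:nondegeneracy} and \Cref{cor:zero-bound}, replacing the products \(c_j\delta_j\) (an element of \(\F_p\) times a nonzero field element) by arbitrary field coefficients \(d_j \in \F\). In the earlier argument, \(c_j\) and \(\delta_j\) were used only through one fact: \(c_j\delta_j = 0\) if and only if \(c_j = 0\). Here that role is played by \(d_j\) itself, so the hypothesis ``at least one \(d_j \ne 0\)'' substitutes directly for \(\vec{c} \ne \vec{0}\).

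For the nonvanishing, write \(\psi_j(x) = (A_jx+B_j)/(C_jx+D_j)\) with nonzero determinant. Pole-distinctness gives at most one \(j\) with \(C_j = 0\).

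\emph{Finite poles.} Each \(\psi_j\) with \(C_j \ne 0\) has a simple pole at \(p_j = -D_j/C_j\), with residue \(r_j = -(A_jD_j-B_jC_j)/C_j^2 \ne 0\). Every other summand \(d_{j'}\psi_{j'}^\ell\) is regular at \(p_j\), because the poles are distinct. Hence the coefficient of \((x-p_j)^{-\ell}\) in the Laurent expansion of \(G_\ell = \sum_j d_j\psi_j^\ell\) at \(p_j\) is exactly \(d_j r_j^\ell\). If \(G_\ell \equiv 0\), this forces \(d_j = 0\) for every \(j\) with a finite pole.

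\emph{Pole at infinity.} If no pole is at \(\infty\), we already contradict the hypothesis that some \(d_j \ne 0\). Otherwise, say \(C_0 = 0\). Then \(G_\ell = d_0\psi_0^\ell\) with \(\psi_0\) affine and nonconstant. Its leading coefficient \(A_0/D_0\) is nonzero because \(A_0D_0 \ne 0\). Since \(d_0 \ne 0\), this expression is not identically zero, a contradiction.

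For the zero count, I would copy the proof of \Cref{cor:zero-bound} verbatim. Let \(D(x) = \prod_{j\in S}(C_jx+D_j)^\ell\), where \(S = \{j : C_j \ne 0\}\), and set \(N = G_\ell D\). The same degree bookkeeping gives \(\deg N \le n\ell\) and \(\deg D \le n\ell\), with \(N \not\equiv 0\) by the previous step. Zeros of \(G_\ell\) in \(\F\) are roots of \(N\), so there are at most \(n\ell\) of them. The count over \(\Pp(\F)\) then follows from \(\deg(N/D) = \max(\deg N, \deg D) \le n\ell\).

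I do not expect a real obstacle. The only point needing care is that the residue argument never used \(c_j \in \F_p\). The argument happens entirely inside \(\F(x)\), so field-valued coefficients are harmless.
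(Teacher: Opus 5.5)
Your proposal is correct and follows the paper's proof essentially step for step. The leading Laurent coefficient \(d_j r_j^\ell\) at each finite pole forces \(d_j=0\); if one pole lies at \(\infty\), what remains is \(d_0\psi_0^\ell\) with \(\psi_0\) nonconstant affine, a nonzero polynomial; and the zero bound comes from clearing denominators with \(D(x)=\prod_{j\in S}(C_jx+D_j)^\ell\), exactly as in \Cref{cor:zero-bound}. One cosmetic point, which the paper shares: \(\deg(N/D)=\max(\deg N,\deg D)\) holds only when \(N/D\) is in lowest terms, but in general you still get \(\le\), and that is all the zero count over the projective line needs.
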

\begin{proof}
Write \(\psi_j(x) = (A_j x + B_j)/(C_j x + D_j)\) with \(\det(M_j) = A_j D_j - B_j C_j \ne 0\), and let \(p_j\) denote the pole of \(\psi_j\).  The poles are pairwise distinct by hypothesis.  Let \(S = \{j : C_j \ne 0\}\) (finite poles) and \(T = \{j : C_j = 0\}\) (pole at \(\infty\)), so \(|T| \le 1\).

\vst
\noindent\emph{Case~A}\quad All poles are finite (\(T = \emptyset\)).  For each \(j\), the residue of \(\psi_j\) at its pole \(p_j = -D_j/C_j\) is \(r_j = -\det(M_j)/C_j^2 \ne 0\).  Near \(p_j\), we have \(\psi_j(x) = r_j/(x - p_j) + O(1)\), so \((\psi_j(x))^\ell = r_j^\ell/(x - p_j)^\ell + \tx{lower-order terms}\).  Since \(p_j \ne p_{j'}\) for \(j \ne j'\), the function \(\sum_j d_j (\psi_j(x))^\ell\) has a Laurent expansion near \(p_j\) whose leading term is \(d_j r_j^\ell / (x - p_j)^\ell\); all other summands are regular at \(p_j\).  If \(\sum_j d_j (\psi_j(x))^\ell \equiv 0\), this leading coefficient must vanish: \(d_j r_j^\ell = 0\).  Since \(r_j \ne 0\), we get \(d_j = 0\) for every \(j \in S\).  With \(T = \emptyset\), this gives \(d_j = 0\) for all \(j\), contradicting the hypothesis.

\vst
\noindent\emph{Case~B}\quad One pole is at \(\infty\) (say \(j = 0\), so \(C_0 = 0\); all other \(C_j \ne 0\)).  The residue argument at each finite pole \(p_j\) (\(j \ge 1\)) gives \(d_j r_j^\ell = 0\), hence \(d_j = 0\) for all \(j \ge 1\).  Then \(\sum_j d_j (\psi_j(x))^\ell = d_0 (\psi_0(x))^\ell\).  Since \(C_0 = 0\), \(\psi_0(x) = (A_0/D_0) x + B_0/D_0\) is a nonconstant affine function (\(A_0/D_0 \ne 0\) because \(\det(M_0) = A_0 D_0 \ne 0\)).  Thus \(d_0 (\psi_0(x))^\ell\) is a nonzero polynomial (since \(d_0 \ne 0\) by hypothesis, as all other \(d_j\) vanish).

The zero bound \(\le n\ell\) follows by the same degree argument as in \Cref{cor:zero-bound}:  multiply \(\sum_j d_j (\psi_j(x))^\ell\) by \(D(x) = \prod_{j \in S} (C_j x + D_j)^\ell\) to obtain a nonzero polynomial \(N(x)\) of degree \(\le n\ell\).
\end{proof}

\subsection{Multi-block theorems}\label{subsec:multi-block-theorems}

\begin{theorem}[Multi-block, fixed pattern]\label{thm:multi-block-fixed-pattern}
Let \(p\) be any prime, \(d \ge 2\), and \(2 \le k \le n \le d(k-1)\) with \(n \le p^d - 1\).  Fix an admissible multi-block leakage pattern with \(M \le d(k-1)\) total blocks.  The set of \(x_0 \in \F^*\) for which the scheme is not perfectly secure \emph{against this pattern} has size \(\le n + n \cdot p^M\).  A good \(x_0\) exists when \(M < d - \log_p(2n)\).
\end{theorem}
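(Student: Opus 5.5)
The plan mirrors the proof of \Cref{thm:single-block}, with \Cref{lem:nondegeneracy-F} in place of \Cref{lem:nondegeneracy}. By \Cref{pro:multi-block-reduction}(c) and (b), it suffices to show that for all $x_0$ outside a small set, the test matrix $\Theta$ has full column rank $M$. Equivalently, no nonzero $\vec{c} = (c_{j,r}) \in \F_p^M$ satisfies
\[
\sum_{j=0}^{n-1} \sum_{r=1}^{m_j} c_{j,r}\, \eta^{(i_j^{(r)})}\, x_j^\ell = 0 \quad \text{for all } \ell = 1, \ldots, k-1.
\]
We first discard the structural exclusion set $\{\alpha-1\} \cup \Poles(n)$, which has size $n$ by \Cref{rem:structural-exclusions}. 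For $x_0$ outside it, \Cref{pro:distinct-eval} shows the $x_j = \Phi^j(x_0)$ are finite, nonzero and distinct, so each $x_j^\ell$ is the value of the rational function $(\Phi^j(x))^\ell$ at $x_0$.

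Fix a nonzero $\vec{c}$ and group it by shares, setting $d_j = \sum_r c_{j,r}\, \eta^{(i_j^{(r)})} \in \F$. By admissibility (\Cref{rem:admissibility}) and \Cref{lem:eta-independence}, the shifting factors within a share are distinct members of an $\F_p$-basis. Hence $d_j = 0$ forces $c_{j,1} = \cdots = c_{j,m_j} = 0$, and since $\vec{c} \neq \vec{0}$, at least one $d_j$ is nonzero. We then specialise to $\ell = 1$, as in \Cref{rem:ell-equals-one}. The kernel condition implies $H(x_0) = 0$, where $H(x) = \sum_j d_j\, \Phi^j(x)$. By \Cref{cor:distinct-poles} the $\Phi^j$ have pairwise distinct poles, so \Cref{lem:nondegeneracy-F} gives $H \not\equiv 0$ with at most $n$ zeros in $\Pp(\F)$. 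Thus each nonzero $\vec{c}$ contributes at most $n$ bad base points.

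A union bound over the $p^M - 1$ nonzero vectors $\vec{c} \in \F_p^M$, with the pattern fixed, gives
\[
|\Bad_{\text{pattern}}| \le n + n(p^M - 1) \le n + n\, p^M .
\]
No enumeration over block positions is needed, since the pattern is fixed; this is why the $d^n$ factor of \Cref{thm:single-block} disappears. For existence, a good $x_0$ exists whenever $p^d - 1 > n + n p^M$. This holds if $p^d \ge 2n\, p^M + 1$, i.e.\ roughly $M < d - \log_p(2n)$. With strict inequality $p^{d-M} > 2n$, we get $p^d > 2n\,p^M \ge n + n p^M + (n p^M - n)$, and the $-1$ is absorbed because $n p^M \ge n + 1$ whenever $p^M \ge 2$. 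I would check the boundary case $M = 0$ separately; it cannot occur, since each $m_j \ge 1$.

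I expect the main obstacle to be the reduction from $\F_p$-coefficients $c_{j,r}$ to $\F$-coefficients $d_j$, namely ensuring that a nonzero $\vec{c}$ always yields some $d_j \neq 0$. This step rests entirely on admissibility, and it is why \Cref{lem:nondegeneracy-F} is needed rather than \Cref{lem:nondegeneracy}. The other delicate point is the final inequality for existence, where the constants $n$ versus $2n$ must be tracked carefully.
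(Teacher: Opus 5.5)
Your proposal is correct and matches the paper's proof step for step. You group a kernel vector by shares into \(d_j\), use admissibility to guarantee some \(d_j \ne 0\), apply \Cref{lem:nondegeneracy-F} at \(\ell = 1\) to get at most \(n\) zeros per vector, and take a union bound over the \(p^M - 1\) nonzero vectors plus the \(n\) structural exclusions. Your extra checks (that the \(x_j\) are finite so evaluating the rational function at \(x_0\) is legitimate, and that \(p^M \ge 2\) absorbs the \(-1\) in \(p^d - 1 > n + n p^M\)) fill in points the paper passes over quickly.
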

\begin{proof}
A nonzero vector \((c_{j,r})_{j,r} \in \F_p^M \setminus \{\vec{0}\}\) lies in \(\ker(\Theta)\) if and only if
\[
\sum_{j=0}^{n-1} \sum_{r=1}^{m_j} c_{j,r} \eta^{(i_j^{(r)})} x_j^\ell = 0 \quad \tx{for all } \ell = 1, \ldots, k-1.
\]
Grouping the inner sum by shares, this becomes \(\sum_{j=0}^{n-1} d_j x_j^\ell = 0\) for all \(\ell\), where \(d_j := \sum_{r=1}^{m_j} c_{j,r} \eta^{(i_j^{(r)})} \in \F\).  By admissibility, the shifting factors \(\eta^{(i_j^{(1)})}, \ldots, \eta^{(i_j^{(m_j)})}\) within share \(j\) are \(\F_p\)-independent (\Cref{lem:eta-independence}), so \(d_j = 0\) if and only if \(c_{j,r} = 0\) for all \(r\).  Since \(\vec{c} \ne \vec{0}\), at least one \(d_j \ne 0\).

By \Cref{lem:nondegeneracy-F} with \(\ell = 1\) and \(\psi_j = \Phi^j\):  the function \(G_1(x) = \sum_j d_j \Phi^j(x)\) is not identically zero and has at most \(n\) zeros in \(\Pp(\F)\).  The set of \(x_0 \in \F^*\) with \(G_1(x_0) = 0\) therefore has size \(\le n\).

There are \(p^M - 1\) nonzero coefficient vectors in \(\F_p^M\).  By the union bound over all such vectors, plus the \(n\) structural exclusions:
\[
|\Bad| \le n + (p^M - 1) \cdot n \le n + n \cdot p^M.
\]
A good \(x_0\) exists when \(|\F^*| = p^d - 1 > n + n \cdot p^M\), which holds when \(p^d > 2n \cdot p^M\), i.e., \(M < d - \log_p(2n)\).
\end{proof}

\begin{theorem}[Universal multi-block]\label{thm:multi-block-universal}
Let \(p\) be any prime, \(d \ge 2\), and \(2 \le k \le n \le d(k-1)\) with \(n \le p^d - 1\), and let \(n \le M \le d(k-1)\).  To achieve perfect security against all admissible multi-block patterns with \(\le M\) blocks simultaneously, we get \(|\Bad| \le n + n \cdot (dpe)^M\).  A good \(x_0\) exists when \(d > M\big(\frac{5}{2} + \log_p d\big) + \log_p(2n)\).
\end{theorem}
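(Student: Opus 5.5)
The approach is a union bound over leakage patterns, with \Cref{thm:multi-block-fixed-pattern} supplying the bound for each individual pattern.

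\emph{Reduction to maximal patterns.} A pattern with fewer blocks is handled by any admissible pattern containing it. If \(\Theta\) has full column rank \(M'\) for the larger pattern, then deleting columns preserves full column rank for the sub-pattern. Likewise, by \Cref{pro:multi-block-reduction}(b), surjectivity of \(L\) onto the larger coordinate set implies surjectivity onto any sub-coordinate set, since we are just projecting. So it suffices to range over admissible patterns, each described by a tuple of subsets \(I_j\subseteq\{0,\ldots,d-1\}\) with \(m_j=|I_j|\ge 1\) and \(\sum_j m_j = M'\le M\).

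\emph{Bad-set bound.} For each such pattern, \Cref{thm:multi-block-fixed-pattern} shows the security-bad base points are contained in the \(n\) structural exclusions together with at most \(n(p^{M'}-1)\) further points. These further points are the zeros of \(G_1=\sum_j d_j\Phi^j\) over the nonzero coefficient vectors. So
\[
|\Bad| \le n + n\cdot \#\{\text{patterns}\}\cdot p^{M}.
\]
It remains to count patterns. A pattern with \(M'\) blocks is a choice of an \(M'\)-element subset of the \(nd\) (share, position) pairs, so the number of patterns is at most
\[
\sum_{M'\le M}\binom{nd}{M'} \le \Big(\frac{e\,nd}{M}\Big)^{M},
\]
using the standard estimate \(\sum_{i\le M}\binom{N}{i}\le (eN/M)^M\), valid for \(M\le N\). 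Because \(n\le M\), this is at most \((ed)^M\). Multiplying by \(p^M\) gives \(|\Bad|\le n+n(dpe)^M\).

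\emph{Existence.} A good base point exists once \(p^d-1>n+n(dpe)^M\), and for this it suffices that \(p^d\ge 2n(dpe)^M\). Taking \(\log_p\), the condition becomes \(d\ge \log_p(2n)+M(1+\log_p d+\log_p e)\). Since \(\log_p e\le \log_2 e<3/2\), the stated hypothesis \(d>M(\tfrac52+\log_p d)+\log_p(2n)\) implies it.

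The main obstacle I expect is the counting step. The bound \((ed)^M\) depends on \(n\le M\), which is why the theorem assumes it, and one must confirm that the union bound over smaller \(M'\) stays inside the same estimate. The rest is bookkeeping on top of \Cref{thm:multi-block-fixed-pattern}.
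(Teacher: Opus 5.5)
Your proposal is correct and follows the paper's proof. The ingredients match: a union bound over patterns using \Cref{thm:multi-block-fixed-pattern}, with the \(n\) structural exclusions counted once; the pattern count \(\sum_{t\le M}\binom{nd}{t}\le (nde/M)^M\le (de)^M\) using \(n\le M\), which the paper reaches via Chu--Vandermonde and you reach directly by viewing patterns as subsets of the \(nd\) share--position pairs; and the same \(\log_p\) computation for existence. Your opening "reduction to maximal patterns" is never used, since you then range over all \(M'\le M\) anyway, so you can drop it.
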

\begin{proof}
We take a union bound over all admissible multi-block patterns with \(\le M\) blocks total.  An admissible pattern is specified by choosing, for each share \(j\), a nonempty subset of \(\{0, \ldots, d-1\}\) of block positions.  The total number of such patterns is at most \(\binom{d}{m_0} \cdots \binom{d}{m_{n-1}}\) summed over all \((m_0, \ldots, m_{n-1})\) with \(\sum m_j \le M\).  Also note that by definition of admissible patterns, we have \(m_0,\ldots,m_{n-1}\ge1\) and \(M\ge n\).  This gives us the total number as
\[
\sum_{\substack{m_0+\cdots+m_{n-1}\le M\\m_0,\ldots,m_{n-1}\ge1}}\binom{d}{m_0}\cdots\binom{d}{m_{n-1}}\le\sum_{t=n}^M\binom{nd}{t}\le\bigg(\frac{nde}{M}\bigg)^M\le(de)^M.
\]
where the first inequality is by the Chu-Vandermonde identity~\cite[Exercise 1.9]{jukna2011extremal}, the second inequality is the standard estimate on partial binomial sums~\cite[Proposition 1.4]{jukna2011extremal} (valid since \(1 \le M \le nd\)), and the third inequality is due to \(M \ge n\).

For each fixed pattern, \Cref{thm:multi-block-fixed-pattern} gives at most \(n \cdot p^M\) security-bad base points (beyond the structural exclusions).  Taking the union bound over all \(\le (de)^M\) patterns, we get
\[
|\Bad| \le n + (n \cdot p^M) \cdot (de)^M = n + n(dpe)^M.
\]
A good \(x_0\) exists when \(p^d - 1 > n + n(dpe)^M\).  Taking logarithms base \(p\),  we need \(d > \log_p(2n) + M(\log_p d + 1 + \log_p e)\), which is satisfied when \(d > M\big(\frac{5}{2} + \log_p d\big) + \log_p(2n)\).
\end{proof}

\cleardoublepage

%% ============================================================
%%  APPENDIX A
%% ============================================================
\appendix

\section{Alternative proof of the dichotomy}\label{app:dichotomy-proofs}

The perfect-or-completely-insecure dichotomy used throughout this paper (\Cref{pro:dichotomy}) is a special case of the general dichotomy result of~\cite{nguyen-2025}.  For completeness, we record here a direct derivation in our setting, working through the leakage map \(L_{\vec{i}}\) and the cosets-of-subspaces structure.  The proof depends only on \(\F_p\)-linearity of coordinate extraction and on Shamir's \(\F\)-linearity in the randomness;  it does not use any property of generalized Reed--Solomon codes or of binary images.

\begin{proof}[Proof of \Cref{pro:leakage-map}]
Since \(P(x_j) = s + \sum_{t=1}^{k-1} a_t x_j^t\), the \(\F_p\)-linearity of the coordinate map gives
\[
\ell_j = (P(x_j))_{i_j} = (s)_{i_j} + \sum_{t=1}^{k-1} (a_t x_j^t)_{i_j}.
\]
By \Cref{fac:coord-trace}, we have \((a_t x_j^t)_{i_j} = \Tr(a_t x_j^t \cdot f_{i_j}^*)\).  This gives the decomposition \(\ell_j = (s)_{i_j} + \sum_{t=1}^{k-1} \Tr(a_t \cdot x_j^t \cdot f_{i_j}^*)\).

It remains to verify that \(L_{\vec{i}}\) is \(\F_p\)-linear.  For any \(\lambda \in \F_p\) and \(\vec{a}, \vec{b} \in \F^{k-1}\):
\[
L_{\vec{i}}(\lambda \vec{a} + \vec{b})_j = \sum_{t=1}^{k-1} \Tr((\lambda a_t + b_t) \cdot x_j^t \cdot f_{i_j}^*) = \lambda \sum_{t=1}^{k-1} \Tr(a_t x_j^t f_{i_j}^*) + \sum_{t=1}^{k-1} \Tr(b_t x_j^t f_{i_j}^*),
\]
where we used \(\Tr(\lambda \cdot y) = \lambda \cdot \Tr(y)\) for \(\lambda \in \F_p\) (since the trace is \(\F_p\)-linear) and \(\Tr(y + z) = \Tr(y) + \Tr(z)\).  Thus \(L_{\vec{i}}(\lambda \vec{a} + \vec{b}) = \lambda L_{\vec{i}}(\vec{a}) + L_{\vec{i}}(\vec{b})\).
\end{proof}

\begin{proof}[Proof of \Cref{pro:dichotomy}]
Since \(\vec{a} = (a_1, \ldots, a_{k-1})\) is uniform over \(\F^{k-1}\) and \(L_{\vec{i}}\) is \(\F_p\)-linear, the random variable \(L_{\vec{i}}(\vec{a})\) is uniform over the subspace \(\Img(L_{\vec{i}}) \subseteq \F_p^n\).  By \Cref{pro:leakage-map}, the leakage for secret \(s\) is \(\vec{v}(s) + L_{\vec{i}}(\vec{a})\), which is uniform over the coset \(\vec{v}(s) + \Img(L_{\vec{i}})\).  Two cosets of an \(\F_p\)-subspace of \(\F_p^n\) are either identical (when the difference of the shifts lies in the subspace, giving \(\SD = 0\)) or disjoint (giving \(\SD = 1\), since uniform distributions on disjoint sets of equal size have statistical distance \(1\)).  If \(L_{\vec{i}}\) is surjective, then \(\Img(L_{\vec{i}}) = \F_p^n\), so every coset equals \(\F_p^n\).
\end{proof}

\cleardoublepage

% ============================================================
% BIBLIOGRAPHY
% ============================================================
\ifanonymous
\printbibliography
\else
% Use adjustwidth instead of newgeometry to prevent the page break.
% Since your default margin is 2.5cm and you want 1.8cm, 
% we pull the left and right margins out by -0.7cm.
\begin{adjustwidth}{-0.7cm}{-0.7cm}
	\raggedright
	\begin{multicols}{2}[\printbibheading]
		\printbibliography[heading=none]
	\end{multicols}
\end{adjustwidth}
\fi

\cleardoublepage
\end{document}